\documentclass[english,12pt]{article}
\usepackage[T1]{fontenc}
\usepackage[utf8]{inputenc}
\usepackage[english]{babel}
\usepackage[pdftex]{graphicx}
\usepackage[width=\textwidth]{caption}
\usepackage[top=24mm, bottom=24mm, left=24mm, right=24mm]{geometry}
\usepackage{array,float, placeins,flafter, longtable,booktabs,pdflscape}
\usepackage{verbatim}
\usepackage{lmodern}
\usepackage{xr}
\usepackage{mathtools}
\usepackage{multirow}
\linespread{1.35}
\usepackage{amsfonts,amsmath,amsthm,amssymb,bm,bbm,dsfont}
\allowdisplaybreaks
\usepackage{lmodern}
\usepackage{enumitem}
\usepackage{natbib,color}
\usepackage[pdftex]{graphicx}
\usepackage{subfigure,float,booktabs}
\usepackage{dcolumn}
\newcolumntype{L}{D{.}{.}{2,3}}
\usepackage{pbox}
\usepackage{multirow}
\usepackage{tikz}
\usepackage{caption}
\usepackage{longtable}
\usepackage{booktabs}
\usepackage{tabularx}
\usepackage{dcolumn}
\usepackage{threeparttable}
\usepackage{threeparttablex} 
\usepackage{booktabs}
\usepackage{float} 
\usepackage{setspace}

\usepackage{array}
\usepackage{changepage}
\usepackage{amsmath}
\usepackage{longtable}
\usepackage{mathtools}
\definecolor{winered}{rgb}{0.5,0,0}
\usepackage[pdfstartview=FitH,  bookmarks=true,  bookmarksopen=true,  breaklinks=true,  colorlinks=true,  linkcolor=blue,anchorcolor=my,  citecolor=blue, filecolor=blue,  menucolor=blue,  urlcolor=black]{hyperref}
\setcounter{MaxMatrixCols}{10}

\numberwithin{equation}{section}
\setcounter{secnumdepth}{10}
\newtheorem{theorem}{Theorem}[section]
\newtheorem{lemma}[theorem]{Lemma}
\newtheorem{proposition}[theorem]{Proposition}

\theoremstyle{definition}

{\theoremstyle{plain}
	\newtheorem{assumption}{Assumption}}
\definecolor{my}{rgb}{0.05,0.05,0.5}
\definecolor{myBlue}{rgb}{.1,.1,.5}
\definecolor{myGreen}{rgb}{0,.4,0}
\definecolor{myRed}{rgb}{.25,0.15,.5}

\definecolor{my}{rgb}{0.05,0.05,0.5}

\newcommand{\cond}{\displaystyle \stackrel{d}{\longrightarrow}}

\newcommand{\conp}{\stackrel{p}{\longrightarrow}}

\renewcommand{\mathbf}[1]{\textbf{\textit{#1}}}

\newcommand{\Norm}[1]{\mathcal{N}\left(#1\right)}

\newcommand{\E}{\operatorname{E}}
\newcommand{\V}{\operatorname{Var}}

\newcommand{\Rmnum}[1]{\expandafter\@slowromancap\romannumeral #1@}

\usepackage{abstract}
\setlength{\absleftindent}{2.2em}
\setlength{\absrightindent}{2.2em}
\setlength{\abstitleskip}{-1em}
\setlength{\absparindent}{0pt}
\setlength{\absparsep}{3pt}

\makeatletter
\g@addto@macro\maketitle{\vspace{-2em}}
\makeatother

\usepackage{setspace}
\newfont{\bbf}{cmbx12 scaled 1435}

\begin{document}
	\title{{Digital Adoption and Cyber Security: An Analysis of Canadian Businesses}\thanks{The authors thank participants at the 58th Annual Conference of the Canadian Economics Association, the 39th Annual Meeting of the Canadian Econometrics Study Group (CESG), and the Bank of Canada Workshop on Understanding and Correcting for Non-Response Bias for their helpful comments. The authors also gratefully acknowledge access to data provided by the RDC and Statistics Canada (Project 21-MAPA YRK-721).}}
	\author{Joann Jasiak\thanks{York University, \texttt{jasiakj@yorku.ca}} \and Peter MacKenzie\thanks{York University, \texttt{petem9@yorku.ca}} \and Purevdorj Tuvaandorj\thanks{York University, \texttt{tpujee@yorku.ca}}}
	\date{\today} 
	
	\maketitle

\begin{abstract}
	This paper examines how Canadian firms balance the benefits of technology adoption against the rising risk of cyber security breaches. We merge data from the 2021 Canadian Survey of Digital Technology and Internet Use and the 2021 Canadian Survey of Cyber Security and Cybercrime to investigate the trade-off firms face when pursuing digitalization to enhance productivity and efficiency, balanced against the potential increase in cyber security risk. The analysis explores the extent of digital technology adoption, differences across industries, the subsequent associations with  efficiency, and associated cyber security vulnerabilities. We build aggregate variables, such as the Business Digital Usage Score and a cyber security incidence variable to quantify each firm's digital engagement and cyber security risk. A survey-weight-adjusted Lasso estimator is employed, and a debiasing method for high-dimensional logit models is introduced to identify the predictors of technological efficiency and cyber risk. The analysis reveals a digital divide linked to firm size, industry, and workforce composition. While rapid expansion of tools such as cloud services or artificial intelligence can raise efficiency, it simultaneously heightens exposure to cyber threats, particularly among larger enterprises. 
\end{abstract}
\bigskip
\noindent\textbf{Keywords:} Cyber Security; Lasso; Post‑Selection Inference; Stochastic Frontier Analysis; Survey Data; Technological Efficiency\\
\noindent\textbf{JEL Codes:} C35, C55, C83, D22, L86, M15, O33

\newpage

\section{Introduction}

The digital transformation of business operations is reshaping the Canadian economy, offering opportunities for increased productivity and innovation. For businesses, digital adoption is no longer optional; it is essential for maintaining competitiveness in a globalized market. However, with greater reliance on digital tools comes increased exposure to cyber security risks. Understanding this \emph{trade-off}—the efficiency gains from digital adoption versus the vulnerabilities it introduces—is critical for Canadian businesses.

This study examines the extent of digital adoption and cyber security practices among Canadian firms, using a novel dataset that combines information from the 2021 Canadian Survey of Digital Technology and Internet Use (SDTIU) and the 2021 Canadian Survey of Cyber Security and Cybercrime (CSCSC). These surveys provide detailed data on firms’ use of advanced technologies such as cloud computing, artificial intelligence, and enterprise management systems, as well as the frequency, incidence, and severity of cyber security incidents on business operations. To our knowledge, this is the first study to analyze the most recent version of the CSCSC dataset, offering an opportunity to jointly examine digital adoption and cyber security outcomes.

We investigate which types of firms are adopting digital technologies, how adoption varies across industries, and whether these technologies are associated with higher rates of cyber security incidents. We also analyze firm-specific characteristics—such as size, industry, and workforce composition—that shape both technological efficiency and cyber resilience. To quantify digital adoption, we construct the Business Digital Usage Score (BDUS), which enables a comparative analysis of firms’ engagement with digital tools.

To address these questions empirically, we consider a suite of econometric models that link firm characteristics to digital adoption and cyber outcomes. The novelty of our approach lies in introducing high-dimensional logit models estimated via a Lasso-penalized maximum likelihood estimator with survey weights. This allows us to accommodate the large number of firms and the high dimensionality of mostly qualitative explanatory variables in our dataset and their interactions. We use a debiasing method for inference on the selected model's coefficients and establish its asymptotic validity. To assess how closely firms operate relative to their technological usage frontier, we apply stochastic frontier analysis. Additionally, we employ $k$-means clustering to categorize firms by technological efficiency, facilitating the identification of distinct profiles of digital adoption and efficiency.

The literature on digital technology adoption emphasizes its role in improving firm productivity. Larger firms are often better positioned to implement these technologies, benefiting from economies of scale and greater access to resources \citep{Leung2008}. In contrast, smaller firms frequently encounter barriers, including high costs of implementation and limited technical expertise, which can hinder their ability to fully realize the potential benefits of digital adoption. \cite{ferrari2012} demonstrates that industry-specific differences in digital readiness significantly affect adoption rates, while \cite{aghimien2021} highlight the role of regional policies in either enabling or constricting firms’ technological advancement. Together, these studies emphasize the structural factors shaping disparities in digital technology adoption across firms.\par

In Canada, \cite{bilodeau2019}, using data from the 2017 CSCSC, show the widespread reliance of Canadian businesses on digital tools. They report that about 92\% of Canadian businesses used one or more digital technologies or services in 2017, with significant increases in the adoption of websites and social media integration since 2013. In addition, just over one-fifth of Canadian businesses reported being impacted by cyber security incidents that affected their operations, with 54\% noting that these incidents prevented employees from carrying out day-to-day work and about 30\% experiencing additional repair or recovery costs.\par

While digital adoption is associated with higher productivity, it is also associated with increased risks, particularly related to cyber security. The Geneva Association, a leading international think tank of the insurance industry, defines cyber risks as breaches in confidentiality, availability, and data integrity, posing operational threats to firms that increasingly rely on interconnected digital systems \citep{geneva2016}. \cite{cebula2010} expand on this definition, framing cyber risks as disruptions that extend beyond information technology (IT) systems to affect broader business stability. In Canada, the Toronto Public Library system experienced significant disruptions following a cyberattack, while the Nova Scotia Health Department faced operational delays and data breaches during a similar event \citep{toronto_library,NS_health}.\par

The adoption of advanced technologies such as Internet of Things (IoT) and enterprise management systems may be associated with elevated cyber security risks. \cite{blichfeldt2021} highlight that while these systems can improve productivity, their complexity often introduces integration challenges that, if poorly managed, can undermine business operations. Additionally, the widespread use of interconnected devices has expanded the potential attack surface for cybercriminals, requiring firms to invest more heavily in cyber security infrastructure.\par

Remote work adoption during the COVID-19 pandemic accelerated the reliance on digital tools but also introduced new risks and challenges. \cite{hackney2022} find that firms effectively utilizing digital tools during the pandemic demonstrated resilience in maintaining operations under lockdown conditions. However, \cite{aczel2021} and \cite{kitagawa2021} report that the rapid shift to remote work led to heightened employee burnout and increased risks of phishing attacks, underscoring the broader implications of accelerated digital adoption.\par

Cyber insurance is a  tool for mitigating cyber security risks. According to the OECD \citep{oecd2017}, the cyber insurance market doubled in size between 2015 and 2020, fueled by firms’ increasing awareness of cyber threats. However, Fitch Ratings \citep{fitch2021} highlights that high premiums and restrictive coverage policies hinder adoption, especially among smaller firms. Globally, cyber security spending is projected to surpass \$170 billion by 2026 \citep{gartner2021}.

Despite the growing demand for cyber insurance, the unobservable nature 
of cyber security investment creates a moral hazard problem. Specifically, 
once a firm purchases cyber insurance, it may strategically reduce its 
investment in cyber security measures, since the insurer cannot directly 
monitor or verify the level of protection the firm maintains. 
\cite{ahnert2022} argue that firms may therefore prioritize visible 
innovations over less transparent risk mitigation strategies, as clients 
are often unable to directly evaluate cyber security expenditures. This 
dynamic creates a paradox in which firms recognize the increasing risks 
of digital adoption but fail to allocate sufficient resources to mitigate 
them effectively.

While the existing literature has documented the risks, costs, and strategic behavior surrounding cyber security investments, relatively little attention has been paid to the underlying firm-level drivers of adoption and efficiency. To date, the literature has not applied high-dimensional methods of analysis to the study of digital adoption and cyber security, nor has it examined the role of workforce composition or the interaction between technology use and cyber security practices. Furthermore, no prior work has examined the impact of cyber security measures within a stochastic frontier framework.
This paper addresses these gaps by combining detailed firm-level survey data with a novel empirical approach designed to capture the complexity of digital adoption and cyber risk across Canadian businesses.

\bigskip
\noindent The paper is organized as follows. Section \ref{sec: Data} describes the datasets used in the study, the SDTIU and the CSCSC, as well as the construction of the associated scores and variables. Section \ref{sub: DebiasedLasso} outlines the paper's methodological contribution. Section \ref{sec:EmpiricalResults} presents the empirical results, discussing the key predictors of technological efficiency and cyber security vulnerabilities. We conclude in Section \ref{sec:conclusion} with a discussion of the implications of our findings for Canadian businesses and policymakers. The appendix is divided into two parts: Appendix \ref{sec: tech app} provides additional technical details on the survey-weighted debiased logit Lasso model, and Appendix \ref{sec:AppendixQuestions} outlines the questions that comprise the BDUS and the Cyber Security Incidence indicator.

\section{Data Description and Variable Construction}\label{sec: Data}

The empirical analysis is based on a merged dataset from two Statistics Canada surveys: the 2021 SDTIU and the 2021 CSCSC. These surveys were merged using firm size and industry classifications from the North American Industry Classification System (NAICS), enabling an integrated study of digital adoption and cyber security risks among Canadian businesses.\footnote{Methodological differences between the surveys include variations in enterprise size definitions (SDTIU defines small firms as those with 5–49 employees, while CSCSC with 10–49 employees) and potential differences in primary respondents' understanding of their enterprise’s operations. Additionally, the surveys differ in weighted samples of 327,567 and 185,644, respectively, including NAICS and enterprise size requirements.} \par

The SDTIU focuses on digital technology adoption, including metrics such as internet usage, e-commerce participation, and ICT adoption, while the CSCSC examines cyber security practices and the impact of cyber incidents on businesses. The merged dataset includes both qualitative variables, such as the implementation of specific technologies and cyber security measures, and quantitative variables, such as cyber security expenditure and incident-related costs. Together, the surveys cover numerous variables relevant to digital adoption and cyber security. The firms vary in size, ranging from small enterprises with fewer than 10 employees to large corporations with over 500 employees. The industries covered include manufacturing, retail, professional services, and information technology. \par

Both surveys apply stratified sampling by industry and firm size, with survey weights assigned by Statistics Canada to correct for selection probabilities, non-response, and sampling biases relative to each survey's target population of Canadian enterprises. Because the two 
surveys differ in their inclusion criteria and target populations (weighted totals of 327,567 for the SDTIU and 185,644 for the CSCSC), the weights are not common across the two datasets. After merging, we adopt the CSCSC survey weights. Because the CSCSC applies stricter 
inclusion criteria than the SDTIU, the merged sample is effectively restricted to firms within the CSCSC sampling frame. This is confirmed by the post-merge weighted population of 179,657 firms, which closely approximates the CSCSC target population of 185,644 rather than the 
SDTIU target population of 327,567. All descriptive statistics and regression estimates incorporate these weights, which scale the sample to be representative of 179,657 firms meeting the CSCSC survey's industry and firm size inclusion criteria (the weighted population).

The SDTIU achieved a response rate of 73\%, while the CSCSC had a response rate of 65\%. The pre-matched sample sizes were 15,683 enterprises for the SDTIU and 12,216 for the CSCSC. After merging, the final dataset comprises 8,377 enterprises, which constitutes the analysis sample. This sample is not a census of all Canadian firms; rather, through the CSCSC survey weights, it is designed to be representative of 179,657 Canadian firms meeting the CSCSC industry 
and firm size inclusion criteria, which we refer to as the weighted population throughout the paper. All descriptive statistics and regression estimates incorporate these survey weights.

Below, we introduce aggregate measures of digital technology usage: BDUS in Section \ref{subsec: BDUS}, a separate measure of Business Technological Efficiency (based on $k$-means clustering) in Section \ref{subsec: BTE}, which examines whether firms encounter significant challenges in implementing the technologies they adopt, and an indicator for Cyber Security Incidence in Section \ref{subsec: CSI}.

\subsection{Business Digital Usage Score}\label{subsec: BDUS}

The BDUS is a quantitative variable constructed to evaluate the digital engagement of Canadian firms by measuring the technologies they have adopted. It condenses responses from the 2021 SDTIU into a single, interpretable score that reflects the cumulative utilization of digital tools across ten distinct domains, such as cloud computing, digital payment systems, artificial intelligence (AI), smart devices, online sales, government digital connectivity, fiber-optic internet, and company websites (see Appendix \ref{sec:AppendixQuestions} for the exact questions).

Throughout this paper, we use ``digital technologies'' to refer specifically to these ten BDUS domains. In the regression analyses (Section~\ref{sec:EmpiricalResults}), we use ``digital practices'' to describe explanatory variables capturing how firms implement and manage their digital operations (e.g., remote work arrangements, ICT training, advertising methods, open-source technology adoption).

For each of the ten domains, we create a binary variable equal to 1 if the firm reports using that technology. Summing these indicators yields a score between 0 and 10, with higher values indicating more extensive digital engagement. Although it is conceivable that advanced technologies (e.g., AI) have disproportionate associations on firm productivity compared to simpler ones (e.g., a basic website), weighting them by perceived importance would introduce additional assumptions. Firms also have heterogeneous technological needs, some rely heavily on cloud computing for scalable data storage, while others benefit more from online payment systems or data analytics. Therefore uniform summation provides a transparent, easily replicable index of a firm’s overall digital footprint.\par 

A consequence of equal weighting is that the BDUS captures the 
breadth of adoption rather than its sophistication. Two firms may 
receive the same score despite very different adoption profiles: 
one through a collection of basic tools such as a website, social 
media, and fiber optic internet, and another through more advanced 
technologies such as cloud computing, artificial intelligence, and 
big data analytics. To the extent that technologically complex 
tools contribute more to firm productivity, the BDUS does not 
distinguish between these profiles, and alternative weighting 
schemes that account for such differences could affect the 
interpretation of the frontier estimates. That said, any weighting 
scheme would require assumptions about the relative importance of 
each technology, which are likely to vary across industries and 
firm types, and there is no consensus in the literature on how such 
weights should be assigned. \par 

\begin{figure}[h!]
	\centering
	\includegraphics[width=1\linewidth, height=0.3\textheight, keepaspectratio]{"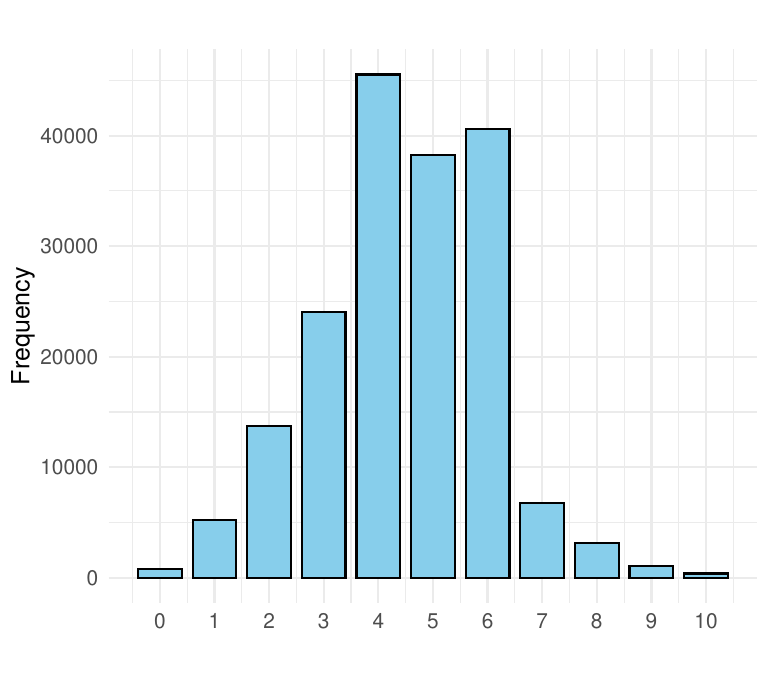"}
	\vspace{-1em}
	\caption{Histogram of Business Digital Usage Scores}
	\label{fig:BDUS_histogram}
\end{figure}

Figure~\ref{fig:BDUS_histogram} shows the distribution of BDUS scores. The majority of firms fall within scores of 3 to 7, suggesting moderate overall adoption levels, with mild peaks at 4 and 6. At the extremes, a small proportion of firms report virtually no digital engagement (score = 0), whereas another small subset achieves near-comprehensive adoption (scores of 9 or 10). This dispersion highlights the heterogeneity of adoption patterns: some businesses implement only a handful of relevant technologies, while others adopt a wide array.\par 

It is important to note that the BDUS does not measure the intensity of usage or the ease of integration. Rather, it provides a concise snapshot of whether certain well-known tools have been adopted in at least a basic form. We use the BDUS to investigate the characteristics of firms that adopt digital technologies through a stochastic frontier model in Section \ref{sec:EmpiricalResults}, measuring how close businesses are to their technological frontier. Additionally, we use the BDUS to assess whether having a more extensive digital profile correlates with cyber risk exposure (Table~\ref{tab:corr_bdus_cyber}).

\subsection{Business Technological Efficiency}\label{subsec: BTE}

While the BDUS captures which digital technologies firms adopt, it does not gauge how well these technologies are integrated. To address this, we construct a measure of Technological Efficiency by applying a $k$-means clustering algorithm to a set of SDTIU survey items about technological implementation challenges. Specifically, these survey items parallel the BDUS domains but ask whether the firm experiences difficulties using each technology, whereas the BDUS questions simply ask if firms use each technology. It is important to note that this measure captures self-reported 
implementation challenges rather than efficiency in a production-theoretic sense. The measure reflects whether firms encounter operational difficulties using adopted technologies, not whether those technologies are deployed in a manner that maximizes output or minimizes cost. The exact questions used in this clustering exercise are listed in Appendix \ref{sec:AppendixQuestions}.

Let $\{z_{i1}, z_{i2}, \dots, z_{i10}\}$ be binary indicators for firm $i$, capturing whether it reports a challenge in each of the ten domains. A response of ``Yes'' indicates an operational problem in using the technology associated with that domain. We apply $k$-means clustering to these ten binary variables and determine that $k=2$ is the optimal cluster count, resulting in two groups: \textit{Digitally Efficient} and \textit{Not Digitally Efficient}. Figure~\ref{fig:tech_problems_clusters} displays the percentage of reported challenges in each domain for these two clusters. Firms in the latter cluster (29\% of the sample) report significantly more issues than those in the former (71\% of the sample). For example, 70.71\% of \textit{Not Digitally Efficient} firms cite difficulties with AI, compared to 23.58\% in the \textit{Digitally Efficient} group. Similarly, 57.55\% of \textit{Not Digitally Efficient} firms encounter challenges with cloud computing, versus only 3.21\% among \textit{Digitally Efficient} firms.

\begin{figure}[h!]
	\centering
	\includegraphics[width=0.9\textwidth]{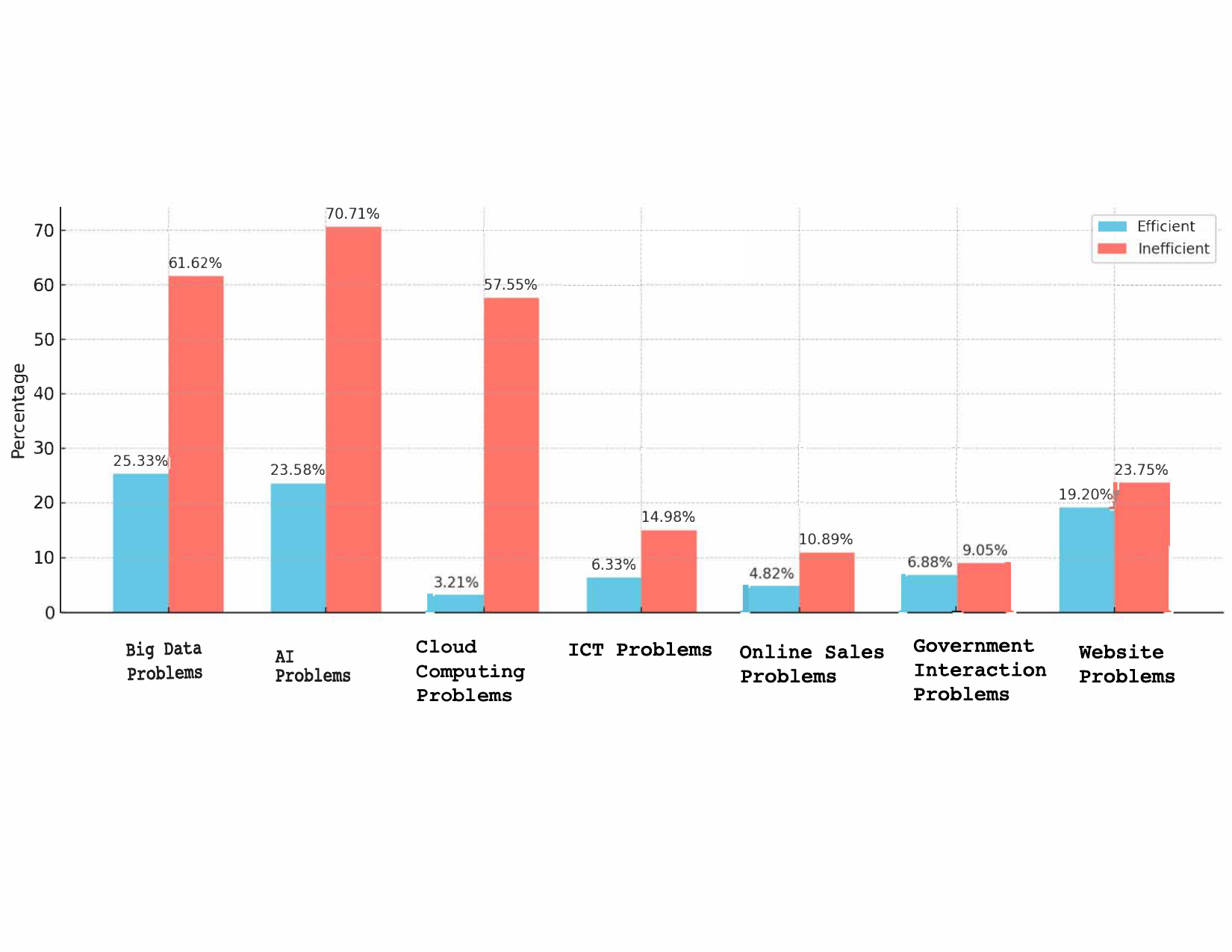}
	\vspace{-5em}
	\caption{Percentage of Technological Problems Between Efficient and Inefficient Clusters}
	\label{fig:tech_problems_clusters}
\end{figure}

The high incidence of challenges among \textit{Not Digitally Efficient} firms does not imply that they fail to adopt these tools. Some businesses report both a high BDUS score (indicating widespread adoption) and frequent operational issues, suggesting partial or suboptimal implementation. Even in the \textit{Digitally Efficient} cluster, a non-trivial share of firms faces difficulties in at least one domain.

This $k$-means classification yields a binary variable, Technological Efficiency, which we use as a dependent variable in one of the logit models in Section~\ref{sec:EmpiricalResults}. Whereas the BDUS measures the extent of adoption, the Technological Efficiency grouping reflects the firm’s ability to use the technology effectively. In this sense, the two measures are complementary: the BDUS indicates how many digital tools a firm adopts, while the logit model using the Technological Efficiency variable reveals whether the adopted technologies are being used efficiently.

We analyze technological efficiency separately from the stochastic frontier model because the challenge-based clustering variables are mechanically related to adoption: firms cannot report implementation challenges for technologies they have not adopted. Including these variables as environmental factors in the SFA would therefore introduce conditioning on a post-outcome variable, potentially biasing efficiency estimates.

\subsection{Cyber Security Incidence}\label{subsec: CSI}

To examine the cyber security challenges faced by Canadian businesses, we construct a set of variables based on survey responses related to the occurrence of cyber security incidents. The primary variable, \textit{Cyber Security Incidence}, indicates whether a business was affected by any cyber security incident in 2021. This binary variable is coded as 1 if the business reported experiencing one or more types of cyber security incidents, and 0 otherwise. These incidents range from theft of assets, business data, or intellectual property to disruptions of business activities. Of the surveyed firms, 22\% reported at least one such incident, while the remaining 78\% reported none.

\begin{figure}[h!]
	\centering
	\includegraphics[width=0.9\textwidth, height=0.4\textheight,]{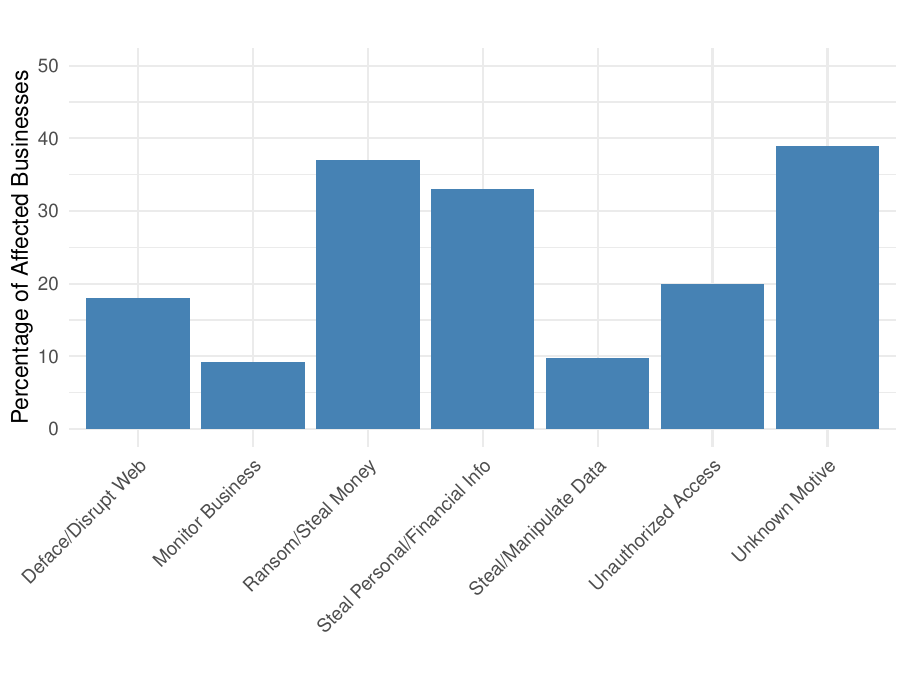}
	\vspace{-0.5em} 
	\caption{Types of Issues Reported by Businesses After Cyber Security Incident (Percentage of Affected Businesses)}
	\label{fig:cyber_impact}
\end{figure}

Figure~\ref{fig:cyber_impact} shows the occurrence of these incident types among the firms that experienced any cyber incident. The most frequently cited issues included incidents to steal money or demand ransom payment (37\%), incidents with an unknown motive (39\%), incidents to steal personal or financial information (33\%), incidents to disrupt or deface the business or web presence (18\%), and incidents to access unauthorised or privileged areas (20\%). Fewer respondents reported incidents to steal or manipulate intellectual property or business data (9.8\%) and incidents to monitor and track business activity (9.2\%). \par

We use this binary \textit{Cyber Security Incidence} variable in subsequent analyses for two primary reasons. It serves as a dependent variable in the cyber security incidence logit model (Section~\ref{sec:EmpiricalResults}), where we identify which firm characteristics and digital adoption practices predict a higher likelihood of experiencing an incident. The \textit{Cyber Security Incidence} variable is also included as a regressor in our stochastic frontier analysis to examine whether having experienced a breach correlates with digital technology usage. The rationale is that a prior cyber incident may be associated with a firms’ subsequent decisions or capabilities regarding digital adoption and security investments.  A full list of cyber incidence types appears in Appendix~\ref{sec:AppendixQuestions}.

\section{Estimation Methodology: Survey Weighted Debiased Lasso}\label{sub: DebiasedLasso}
	
To estimate models predicting technological efficiency and cyber security incidents, we employ a survey-weighted logistic Lasso (hereafter \texttt{svy LLasso}) estimation technique and introduce a debiasing method for inference. This approach is well-suited to our analysis, as it effectively handles high-dimensional data and incorporates a diverse set of variables, including digital practices, cyber security measures, and firm characteristics. Our primary Lasso models include over 50 independent variables, while second-order interaction models expand to more than 200 variables.

To accurately represent the population of Canadian businesses, we adapt the standard logistic Lasso method to incorporate Statistics Canada survey weights. To this end, we define the following general
$\ell_1$-penalized maximum likelihood estimator for a Generalized Linear Model with survey weights, where the \texttt{svy LLasso} estimator is a special case:
	\[
	\hat{\theta} = \operatorname*{argmin}_{\theta=(\alpha, \beta')'\in\mathbb{R}^{p+1}} \left(-L_n(\theta) + \lambda \sum_{j=1}^{p} |\beta_{j}|\right),
	\]
	where:
	\begin{itemize}
	\item \(\theta = (\alpha, \beta')'\) includes the intercept \(\alpha\) and coefficients \(\beta \in \mathbb{R}^p\),
	\item \(L_n(\theta) = -n^{-1}\sum_{i=1}^n w_i g(y_i, x_i'\theta)\) is the survey-weighted log-likelihood, where \(g(y, x'\theta)\) is the negative log-density function (see Appendix \ref{subsec: app2} for a detailed description), \(x_i\) is the regressor vector for firm \(i\), \(y_i\) is the outcome variable, and \(w_i\) is the survey weight,
		\item \( \lambda \sum_{j=1}^{p} |\beta_{j}| \) is the penalty term, with tuning parameter \( \lambda \), enforcing sparsity by shrinking less relevant coefficients to zero.
	\end{itemize}

The methodological contribution of this paper is an extension of 
the existing debiased Lasso literature to the survey-weighted 
setting. The foundational debiased Lasso results of 
\citet{ZhangZhang2014}, \citet{JavanmardMontanari2014}, \citet{Xia2023}, and \citet{ChetverikovSorensen2025}  assume independent 
and identically distributed (i.i.d.) observations with a standard 
unweighted likelihood. Survey weights prevent direct application of 
these results for two reasons. First, the weighted likelihood creates heterogeneous contributions 
across observations, rendering the data independent but not 
identically distributed (i.n.i.d.), which violates the i.i.d.\ 
assumption on which the standard theoretical guarantees for the 
debiased Lasso rest. Second, and more 
critically for inference, the classical information equality between 
the expected Hessian and the variance of the score function breaks 
down under survey weighting: $H(\theta_0) \neq I(\theta_0)$ in 
general. This necessitates a sandwich variance estimator in the 
debiasing step and a corresponding modification of the asymptotic 
normality proof. Proposition~\ref{prop: DB} establishes the asymptotic 
validity of the resulting debiased estimator, characterizing our 
contribution as an extension of \citet{Xia2023} to 
survey-weighted GLMs.

To our knowledge, the only existing work on Lasso estimation with 
survey data is \citet{McConville2017}, who consider model-assisted 
survey regression estimation with the Lasso under finite population 
asymptotics. Our framework differs in two important respects. 
First, we adopt a superpopulation framework common in econometrics, 
as in \citet{Wooldridge2001}, rather than finite population 
asymptotics. Second, we extend the debiasing step to allow 
post-selection inference on parametric functions of the coefficients 
--- including average marginal effects --- in survey-weighted GLMs, 
whereas \citet{McConville2017} focus on linear regression estimation 
without a debiasing step for post-selection inference. Although the 
extension to the survey-weighted setting is relatively 
straightforward given these building blocks, it does not appear to 
have been carried out previously in either the econometrics or the 
survey statistics literature.

For our logit models, the \texttt{svyLLasso} procedure selects the 
variables most predictive of technological efficiency and cyber 
security incidents while explicitly accounting for survey weights. 
This framework has also been applied in \citet{jasiak2023}. In a 
related setting, \citet{JT2023} develop a selective inference 
approach to post-selection inference with survey-weighted penalized 
likelihoods, whereas the present paper adopts a debiasing approach, 
as described below.

\paragraph*{Inference via Debiasing.} For statistical inference on model parameters and average marginal effects (AME) (Appendix \ref{subsec: app1}), we adapt the debiased Lasso method from \cite{ZhangZhang2014} and \cite{JavanmardMontanari2014} to the survey context described above. Given the \(\ell_1\)-penalized maximum likelihood estimator $\hat{\theta}$, the debiasing (DB) method applies a one-step correction:

\begin{equation}
\tilde{\theta} = \hat{\theta} + \hat{H}(\hat{\theta})^{-1} S(\hat{\theta}),
\end{equation}
where $ \hat{H}(\hat{\theta})$ and \( S(\hat{\theta}) \) are the negative Hessian and score function of the weighted log-likelihood function \( L_n(\theta) \).
The adjustment term \( \hat{H}(\hat{\theta})^{-1} S(\hat{\theta}) \) corrects the bias introduced by the \(\ell_1\)-penalized variable selection, enabling reliable inference. This variant, which uses the standard Hessian, follows \cite{Xia2023}, who consider standard GLMs, adapted here for survey weights. We estimate the asymptotic variance of \( n^{1/2} S(\theta_0) \) using a sample information matrix \( \hat{I}(\hat{\theta}) \) (see Appendix \ref{subsec: app2}), where \( \theta_0 \) represents the unknown true values of \( \theta \).

For a nonlinear parameter function \( \rho(\theta) \) (an \( r \times 1 \) vector, possibly \( n \)-dependent), e.g., the AME, we define the debiased estimator:

\begin{equation}\label{eq: rho DB}
	\tilde{\rho} = \rho(\hat{\theta}) + \dot{\rho}(\hat{\theta})' \hat{H}(\hat{\theta})^{-1} S(\hat{\theta}), \quad \dot{\rho}(\theta) = \frac{\partial \rho(\theta)'}{\partial \theta}.
\end{equation}

	\paragraph*{Asymptotic Validity.}
We establish the asymptotic validity of Wald-type inference in the following proposition. To keep the exposition concise, the underlying framework, definitions, and assumptions are provided in Appendix~\ref{sec: tech app}.
	\begin{proposition}[Asymptotic Validity of Survey Debiasing Estimator]\label{prop: DB}
	Let Assumption \ref{A: AsyValid} hold, and assume that 
		\begin{itemize}
			\item \( \lambda = C \sqrt{\frac{\log p}{n}} \) with \( C = O(1) \), \( p \geq 1 \),
			\item \( p^2 / n \to 0 \), and \( m_0 \log p \sqrt{\frac{p}{n}} \to 0 \) as \( n \to \infty \), where \( m_0 \) is the number of non-zero coefficients of $\theta_0$,
		    \item \( \rho(\theta) \) (with fixed \( r < p + 1 \)) is differentiable near \( \theta_0 \) with a locally Lipschitz Jacobian \( \dot{\rho}(\theta) \), and \( \lambda_{\min}(\dot{\rho}(\theta_0)' \dot{\rho}(\theta_0)) > \lambda_l > 0 \), where \( \lambda_{\min} \) denotes the minimum eigenvalue.
		\end{itemize}
		Then:
		\[
		\left( \dot{\rho}(\hat{\theta})' \hat{H}(\hat{\theta})^{-1} \hat{I}(\hat{\theta}) \hat{H}(\hat{\theta})^{-1} \dot{\rho}(\hat{\theta}) \right)^{-1/2} n^{1/2} (\tilde{\rho} - \rho(\theta_0)) \stackrel{d}{\to} \Norm{0, I_r}.
		\]
	\end{proposition}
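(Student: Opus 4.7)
The plan is to combine a Taylor expansion of $\tilde{\rho}$ around $\theta_0$ with standard Lasso convergence rates and a Cram\'er--Wold/Lindeberg--Feller CLT for the survey-weighted score. As preparation, under the tuning choice $\lambda = C\sqrt{\log p / n}$ and the compatibility/restricted eigenvalue type conditions built into Assumption \ref{A: AsyValid}, standard Lasso oracle inequalities applied to the survey-weighted negative log-likelihood yield $\|\hat{\theta}-\theta_0\|_1 = O_p(m_0\sqrt{\log p/n})$ and $\|\hat{\theta}-\theta_0\|_2 = O_p(\sqrt{m_0 \log p/n})$. I would also record the operator-norm consistency $\|\hat{H}(\hat{\theta}) - H(\theta_0)\|_{\mathrm{op}} = o_p(1)$ and $\|\hat{I}(\hat{\theta}) - I(\theta_0)\|_{\mathrm{op}} = o_p(1)$, which follow under $p^2/n \to 0$ from smoothness of $g$ and the moment conditions in Assumption \ref{A: AsyValid}.

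For the main algebraic step, Taylor-expand the score as $S(\hat{\theta}) = S(\theta_0) - \hat{H}(\bar{\theta})(\hat{\theta}-\theta_0)$ for some $\bar{\theta}$ on the segment from $\hat{\theta}$ to $\theta_0$, and expand $\rho(\hat{\theta}) = \rho(\theta_0) + \dot{\rho}(\theta_0)'(\hat{\theta}-\theta_0) + r_n$ with $\|r_n\| = O(\|\hat{\theta}-\theta_0\|_2^2)$ by the local Lipschitz Jacobian hypothesis. Substituting both into \eqref{eq: rho DB} yields the decomposition
\[
\tilde{\rho} - \rho(\theta_0) = \dot{\rho}(\hat{\theta})' \hat{H}(\hat{\theta})^{-1} S(\theta_0) + R_{1n} + R_{2n} + R_{3n},
\]
with $R_{1n} = r_n$, $R_{2n} = [\dot{\rho}(\theta_0)-\dot{\rho}(\hat{\theta})]'(\hat{\theta}-\theta_0)$, and $R_{3n} = \dot{\rho}(\hat{\theta})' [I - \hat{H}(\hat{\theta})^{-1} \hat{H}(\bar{\theta})](\hat{\theta}-\theta_0)$. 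The terms $R_{1n}$ and $R_{2n}$ are each $O_p(\|\hat{\theta}-\theta_0\|_2^2) = O_p(m_0 \log p / n) = o_p(n^{-1/2})$ under the stated rates. For $R_{3n}$, smoothness of the weighted Hessian in $\theta$ together with a lower bound on $\lambda_{\min}(\hat{H}(\hat{\theta}))$ (from the restricted eigenvalue-type condition) give $\|I - \hat{H}(\hat{\theta})^{-1}\hat{H}(\bar{\theta})\|_{\mathrm{op}} = O_p(\sqrt{p}\,\|\hat{\theta}-\theta_0\|_2)$, so that $R_{3n} = O_p(\sqrt{p}\, m_0 \log p / n) = o(n^{-1/2})$ precisely under $m_0 \log p \sqrt{p/n}\to 0$.

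For the leading term, Cram\'er--Wold reduces the problem to showing that $v'\dot{\rho}(\hat{\theta})'\hat{H}(\hat{\theta})^{-1} n^{1/2} S(\theta_0)$ is asymptotically normal for every $v\in\mathbb{R}^r$. Continuity of $\dot{\rho}$, the Lasso rate, and operator-norm consistency of $\hat{H}(\hat{\theta})$ allow replacement of the prefactor by $v'\dot{\rho}(\theta_0)'H(\theta_0)^{-1}$ up to $o_p(1)$. A Lindeberg--Feller CLT applied to the survey-weighted score $n^{1/2} S(\theta_0) = n^{-1/2} \sum_i w_i s_i(\theta_0)$, verified using the weight and moment conditions embedded in Assumption \ref{A: AsyValid}, then yields convergence to $N(0,\, v'\dot{\rho}(\theta_0)'H(\theta_0)^{-1} I(\theta_0) H(\theta_0)^{-1}\dot{\rho}(\theta_0) v)$. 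Consistency of $\hat{I}(\hat{\theta})$ and $\hat{H}(\hat{\theta})$ along the fixed $r$-dimensional direction $\dot{\rho}(\theta_0)$, the eigenvalue bound $\lambda_{\min}(\dot{\rho}(\theta_0)'\dot{\rho}(\theta_0)) > \lambda_l$, and Slutsky's theorem produce the studentized conclusion.

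The main obstacle is controlling $R_{3n}$: the $\sqrt{p}$ factor in the operator-norm bound on $\hat{H}(\hat{\theta}) - \hat{H}(\bar{\theta})$ is essentially unavoidable against a generic direction $\dot{\rho}$ without introducing a node-wise Lasso surrogate for $\hat{H}^{-1}$, and the $\ell_2$ Lasso rate $\sqrt{m_0\log p/n}$ is sharp. The conditions $p^2/n \to 0$ and $m_0 \log p \sqrt{p/n} \to 0$ are calibrated exactly to balance these two effects. A secondary subtlety is that the score CLT must be established for the weighted sum $\sum_i w_i s_i(\theta_0)$ rather than an i.i.d.\ sample, so one must verify the Lindeberg condition using the stratified sampling design and the weight-moment hypotheses folded into Assumption \ref{A: AsyValid}.
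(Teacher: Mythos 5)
Your proposal is correct and follows essentially the same route as the paper: the same mean-value expansion of the score and of $\rho$, the same three remainder terms controlled at the rates $m_0\log p/\sqrt{n}$ and $m_0\log p\sqrt{p/n}$, replacement of $\dot{\rho}(\hat{\theta})'\hat{H}(\hat{\theta})^{-1}$ by $\dot{\rho}(\theta_0)'H(\theta_0)^{-1}$ (which is where $p^2/n\to 0$ is actually consumed, since $n^{1/2}\Vert\hat{H}(\hat{\theta})^{-1}-H(\theta_0)^{-1}\Vert_2\Vert S(\theta_0)\Vert_2=O_p(\sqrt{p^2/n}+\sqrt{p}\,m_0\lambda)$), and a Lindeberg--Feller CLT for the weighted score. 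The only cosmetic difference is that the paper bounds the Hessian-difference remainder through $\Vert R\Vert_\infty\le L_0C_u^2\,n^{-1}\sum_i|x_i'(\hat{\theta}-\theta_0)|^2=O_p(m_0\lambda^2)$ and the crude inequality $\Vert R\Vert_2\le(p+1)^{1/2}\Vert R\Vert_\infty$ rather than your operator-norm bound with the $\sqrt{p}$ factor, which yields the identical order.
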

	
The proof is provided in Appendix \ref{subsec: proof DB}. The order of the tuning parameter $ \lambda$ is standard in the literature \citep{Buhlmann-vandeGeer(2011), Negahban2012, vandeGeer-etal(2014), Hastie2015}. The assumptions on the number of covariates $p$ and model sparsity $m_0$ align with those in \cite{Xia2023}. In particular, the condition $m_0 \log p \sqrt{p/n} \to 0$ is stronger than the condition $m_0 \frac{\log p}{\sqrt{n}} \to 0$ assumed by \cite{vandeGeer-etal(2014)}. However, unlike \cite{vandeGeer-etal(2014)}, no direct assumption is imposed on the sparsity of the inverse Hessian (or information matrix).

The assumption of a locally Lipschitz Jacobian $\dot{\rho}(\theta)$ is slightly stronger than the usual continuous differentiability condition required for testing nonlinear hypotheses (see, e.g., \citet[Section 9]{Newey-McFadden(1994)} and \cite{Hansen(2022b), Hansen(2022)}). Under this assumption, the estimation error arising from the estimation of $\theta_0$ and $\rho(\theta_0)$ becomes negligible.

Using this proposition, we make inference on the parameters $\theta_0$ and $\rho(\theta_0)$. In sum, the debiasing approach above offers a straightforward and robust way to analyze high-dimensional survey data, ensuring both variable selection and valid inference in the presence of survey weights.

  \section{Empirical Results}\label{sec:EmpiricalResults}
  
 The empirical analysis examines the relationship between digital technology adoption and cyber security vulnerabilities among Canadian businesses. We begin by assessing whether broader digital adoption correlates with increased cyber risk (Section~\ref{sub: adopt_risk}) using rank correlations between the BDUS and various security measures. Next, we estimate a Stochastic Frontier Analysis (SFA) model (Section~\ref{sub:SFA}), treating the BDUS as an ``output'' to identify which factors are associated with proximity to a firm's digital usage frontier. We then employ logit models with Lasso selection for two binary outcomes: Technological Efficiency (Section~\ref{sub:efficiency}) and Cyber Security Incidence (Section~\ref{sub: Cyber_incident}), allowing us to determine whether the same firm characteristics or digital practices that are associated with greater adoption also enhance implementation efficiency or increase cyber vulnerability.
 
 The sample sizes for the analyses are as follows: the \texttt{svy LLasso} results for business efficiency are based on a subsample excluding about 2.5\% of firms with a BDUS score less than 2. The cyber security incidence model uses the full merged dataset of 8,377 businesses.
The \texttt{svy LLasso} procedure is implemented using \texttt{R}, where we utilize the default value of the tuning parameter $\lambda$ from the \texttt{R} package \texttt{glmnet} (see Appendix \ref{sec: tech app}).

\subsection{Digital Technology Adoption and Cyber Security} \label{sub: adopt_risk}

We explore whether a  broader digital presence for a firm correlates with heightened cyber risk. Table~\ref{tab:corr_bdus_cyber} reports polychoric and polyserial correlations between the BDUS (an ordinal score from 0 to 10) that measures the amount of technology a firm has adopted and various cyber security measures, including firms cyber security spending (continuous variable), a  binary indicator for whether or not a firm experienced a cyber incident, and whether or not a firm paid ransom as a result of a cyber security breach. We use polychoric correlation for ordinal to binary comparisons and polyserial correlation for the ordinal to continuous comparison.  \par 

The positive and significant correlation between BDUS and both cyber security spending ($\rho=0.156^{***}$) and experiencing an incident ($\rho=0.083^{***}$) suggests that while digitally engaged firms invest more in security, they also face greater exposure to attacks. Firms with a higher BDUS are also more likely to have to pay a ransom as a result of a cyber security breach ($\rho=0.266^{***}$). The absence of any cyber security measures correlates negatively with BDUS ($\rho=-0.068^{***}$), indicating that firms with minimal digital footprints may perceive fewer threats but also forgo basic protective actions.

\begin{table}[h!]
	\small
	\begin{center}
	\caption{Polychoric/Polyserial Correlations Between BDUS and Cyber Security Measures}
	\label{tab:corr_bdus_cyber}
		\begin{tabular}{l r r}
			\toprule
			\textbf{Cyber Security Measure} & \textbf{Correlation} & \textbf{p-value} \\
			\midrule
			Cyber security spending (numeric)   & 0.156 & $<0.001^{***}$ \\
		Experienced a cyber security incident (binary) & 0.083 & $<0.001^{***}$ \\
		Firm implemented cloud storage (binary) & 0.108 & $<0.001^{***}$ \\
		Firm paid ransom (binary)           & 0.266 & $<0.001^{***}$ \\
		Firm implemented no cyber security measures (binary) & $-0.068$ & $<0.001^{***}$\\
			\bottomrule
		\end{tabular}
		\end{center}
\footnotesize{\textit{Notes:} The table reports polychoric correlations for ordinal--binary comparisons and polyserial correlations for the ordinal BDUS and the continuous measure of cyber security spending. All correlations are statistically significant at the 1\% level. Significance levels: *** \(p<0.01\), ** \(p<0.05\), * \(p<0.10\).}
\end{table}

These correlations provide preliminary evidence of a trade-off: as firms adopt more digital tools (higher BDUS), they may both allocate more resources to cyber security and become more frequent targets of attacks. This finding motivates the proceeding empirical analyses.

\subsection{Digital Technology Adoption by Canadian Businesses}\label{sub:SFA}
Section~\ref{subsec: BTE} identified firms that are technologically 
efficient based on their ability to implement adopted digital tools 
without substantial operational frictions. We now examine which firm- 
and industry-level characteristics help businesses achieve such 
efficiency.

To this end, we estimate a stochastic frontier model that 
characterizes each firm's position relative to a ``digital usage 
frontier.'' The use of the BDUS as the dependent variable in this 
framework departs from the traditional production frontier setting, 
where the output is a physical quantity or revenue measure and 
inefficiency reflects shortfalls relative to best-practice 
production. Here, the ``output'' is an index of technology adoption 
rather than a production outcome, and the frontier represents the 
maximal adoption level attainable by a firm given its observable 
characteristics. The inefficiency term therefore captures 
\emph{under-adoption}: the gap between a firm's observed digital 
footprint and what it could feasibly achieve given its size, 
industry, and other characteristics. Firms with a smaller 
inefficiency term lie closer to this latent digital adoption frontier 
and more fully realize their adoption potential given their 
observable attributes.

Although this application is non-standard, it fits within a 
broader literature extending SFA to nontraditional discrete outcomes. 
\citet{Griffiths2014} develop a stochastic frontier model for ordered 
categorical output and explicitly note applications beyond standard 
production settings, including health outcomes, happiness indices, 
university research output, and financial credit ratings --- 
a framing that accommodates index-type dependent variables such 
as the BDUS. \citet{Hwu2021} further extend frontier methods to 
interval-type outcomes, demonstrating the flexibility of the SFA 
framework when the dependent variable is bounded or coarsened rather 
than continuously observed. For count-valued outcomes specifically, 
\citet{FeHofler2013} develop count data stochastic frontier models and 
apply them to the patents--R\&D relationship, providing the most 
direct methodological precedent for our setting. Building on this, 
\citet{FeHofler2020} develop Poisson and negative-binomial frontier 
models in which the departure from the frontier reflects systematic 
underreporting or underutilization, and their framework maps directly 
onto our setting, where firms may adopt fewer technologies than their 
characteristics would predict.

Because the BDUS is a count variable, we follow \citet{FeHofler2020} 
and employ a Poisson--Half-Normal (PHN) stochastic frontier 
specification. Let $y_i \in \{0,1,2,\dots\}$ denote the observed 
BDUS score for firm $i$ with covariates $x_i$. The frontier mean is 
given by
\begin{equation}
	\lambda_i = \exp(x_i'\beta - u_i), 
\end{equation}
where \(u_i \sim \mathcal{N}^{+}(0, \sigma_u^2)\) follows a 
half-normal distribution representing the non-negative inefficiency 
term that captures structural barriers to digital adoption. Firms 
with smaller $u_i$ lie closer to the digital-usage frontier and more 
fully realize their adoption potential.

Conditional on $u_i$, the BDUS score follows a Poisson distribution,
\begin{equation*}
y_i \vert u_i \sim \mathrm{Poisson}(\lambda_i).\footnote{For comprehensive treatments of stochastic frontier 
	models and efficiency measurement, see 
	\citet{Parmeter2014} and \citet{Kumbhakar2015}.}
\end{equation*}
Integrating out $u_i$ yields the mixture likelihood
\begin{equation}
	P(y_i \vert x_i;\theta)
	=
	\int_0^\infty 
	f\left(y_i \vert x_i, u;\theta\right)
	g(u;\theta)du,
	\label{eq:mixlik1}
\end{equation}
where $f(\cdot)$ is the Poisson p.m.f. evaluated at $\lambda_i=\exp(x_i'\beta - u)$ and $g(\cdot)$ is the half-normal density. Since \eqref{eq:mixlik1} has no closed-form solution, we approximate it by Monte Carlo integration:
\begin{equation}
	P(y_i \vert x_i;\theta)
	\approx 
	\frac{1}{H}
	\sum_{h=1}^H 
	f\left(y_i \vert x_i, u_h;\theta\right)\,
	g(u_h;\theta),
	\label{eq:llapprox}
\end{equation}
where $\{u_h\}_{h=1}^H$ are draws from the mixing distribution. Let $w_i>0$ denote the survey sampling weight for observation $i$. The weighted log-likelihood is
\begin{equation}
	L_n(\theta)
	=
	\sum_{i=1}^n
	w_i \log P(y_i \vert x_i;\theta),
	\label{eq:wll}
\end{equation}
and the estimator is defined as
\begin{equation*}
\hat{\theta}
=
\arg\max_{\theta} L_n(\theta).
\end{equation*}

Because survey weights break the (asymptotic) information equality, the Hessian-based standard errors are no longer valid. We therefore report robust (sandwich) standard errors constructed from the outer product of the individual, survey-weighted scores.\footnote{Estimation is carried out in \texttt{R}; the code used for estimation is available upon request.}

\medskip
Before presenting results, we note that the cyber security variables in our model may raise exogeneity concerns if regulatory requirements create mechanical links between security practices and technology adoption. In Canada, there are no federal mandates requiring specific cyber security certifications or training as prerequisites for general technology adoption. The Personal Information Protection and Electronic Documents Act (PIPEDA) and provincial privacy laws require ``appropriate safeguards'' for personal information but do not prescribe specific technologies or certifications. Sector-specific requirements exist in regulated industries: OSFI Guideline B-13 governs technology and cyber risk management for federally regulated financial institutions, and provincial health privacy legislation imposes additional requirements on healthcare providers. We include industry fixed effects to absorb these sector-specific regulatory environments. Given our cross-sectional design, we cannot fully address endogeneity concerns and therefore interpret the cyber security coefficients as associations rather than causal effects.

Table~\ref{tab:PoissonSFA_results} reports the maximum-likelihood estimates from the survey-weighted Poisson frontier model.\footnote{Although not reported here, we also estimated a linear stochastic frontier model with half-normal inefficiency using the transformed dependent variable $\log(\mathrm{BDUS}+1)$ to accommodate zero values. The results are qualitatively similar to those reported in Table~\ref{tab:PoissonSFA_results}.}
 Explanatory variables are organized into three categories: \emph{Firm characteristics}, \emph{Digital technologies}, and \emph{Cyber security measures}. 

\begin{table}[h!]
	\small
	\begin{center}
		\caption{Poisson Stochastic Frontier Model for BDUS}
		\label{tab:PoissonSFA_results}
		\begin{tabular}{l D{.}{.}{3} D{.}{.}{3} D{.}{.}{3} D{.}{.}{3}}
			\toprule
			\multicolumn{1}{l}{\textbf{Variable}} &
			\multicolumn{1}{c}{\textbf{Coef.}} &
			\multicolumn{1}{c}{\textbf{Std.\ Error}} &
			\multicolumn{1}{c}{\textbf{t-value}} &
			\multicolumn{1}{c}{\textbf{p-value}}\\
			\midrule
			\quad Intercept                           & 1.211 & 0.018 & 65.498 & <0.001^{***} \\
			\multicolumn{5}{l}{\textit{Firm characteristics}}\\
			\quad Medium-sized firm                   & 0.072 & 0.014 & 5.298  & <0.001^{***} \\
			\quad Large-sized firm                    & 0.170 & 0.016 & 10.454 & <0.001^{***} \\
			\quad Working from home                   & 0.183 & 0.016 & 11.491  & <0.001^{***} \\
			\quad Mining/Utilities/Construction       & -0.178 & 0.032 & -5.627 & <0.001^{***} \\
			\quad Manufacturing                       & 0.040 & 0.014 & 2.882  & 0.004^{***}   \\
			\quad Wholesale/Retail/Transport          & -0.073 & 0.024 & -3.057 & 0.002^{***}   \\
			\quad Education/Health                    & -0.060 & 0.018 & -3.331 & 0.001^{***} \\
			\quad Arts/Accommodation/Food             & 0.155 & 0.020 & 7.719  & <0.001^{***} \\
			\quad Other services                      & -0.032 & 0.025 & -1.258  & 0.209     \\
			\multicolumn{5}{l}{\textit{Digital practices}}\\
			\quad Blockchain                          & 0.075 & 0.036 & 2.058  & 0.040^{**}    \\
			\quad Open source technologies            & 0.111 & 0.015 & 7.656  & <0.001^{***} \\
			\quad Client information management       & 0.229 & 0.015 & 15.189 & <0.001^{***} \\
			\quad Paid advertising                    & 0.055 & 0.016 & 3.487  & <0.001^{***}   \\
			\quad Free advertising                    & 0.113 & 0.016 & 6.882  & <0.001^{***} \\
			\quad Firm provides ICT training          & 0.039 & 0.026 & 1.511  & 0.131     \\
			\quad Sales-related problems              & 0.089 & 0.032 & 2.791  & 0.005^{***}   \\
			\multicolumn{5}{l}{\textit{Cyber security measures}}\\
			\quad Gender in ICT roles (\% female)     & 0.002 & 0.000 & 7.820  & <0.001^{***} \\
			\quad Gender in cyber security (\% female)& -0.000 & 0.000 & -0.835 & 0.404     \\
			\quad Cyber security certification        & 0.032 & 0.016 & 1.971  & 0.049^{**}     \\
			\quad Cyber security practices            & -0.023 & 0.017 & -1.326 & 0.185     \\
			\quad Cyber security training             & 0.037 & 0.018 & 2.005  & 0.045^{**}    \\
			\quad Cyber security insurance            & -0.026 & 0.016 & -1.662 & 0.096^{*}     \\
			\quad Cyber security incidents            & 0.012 & 0.018 & 0.654  & 0.513     \\
			\multicolumn{5}{l}{\textit{Variance parameter:}}\\
			\quad $\log(\sigma)$         & -8.479 & 0.019 & -444.730 & <0.001^{***} \\[4pt]
			\multicolumn{5}{l}{\quad Test for $H_{0}:\sigma = 0$:\quad Wald statistic $=52.449$,\; $p<0.001^{***}$} \\
			\bottomrule
		\end{tabular}
	\end{center}
\footnotesize{\textit{Notes:} All numeric values are rounded to three decimals. Reference categories are \emph{Small firms} (firm size) and \emph{Professional services} (industry). All explanatory variables are indicator variables unless otherwise specified. A positive coefficient indicates that the variable moves the firm closer to its digital usage frontier. Reported standard errors are robust (sandwich) estimates based on the outer product of the individual, survey-weighted scores, and the $p$-values for the coefficients are computed using the standard normal approximation. The $p$-value for the Wald test of $H_{0}:\sigma=0$ is calculated using the $0.5\chi^2_{0}+0.5\chi^2_{1}$ mixture distribution. Significance levels: *** $p<0.01$, ** $p<0.05$, * $p<0.10$. The model is estimated using $H=10{,}000$ simulation draws. 
}
\end{table}

The result shows that several firm characteristics and digital practices are strongly associated with proximity to the digital-usage frontier. \emph{Medium} ($0.072^{***}$) and \emph{Large} ($0.170^{***}$) firms both exhibit higher BDUS levels and perform significantly better than small businesses, while firms adopting remote-work arrangements (\emph{Working from home}, $0.183^{***}$) also align positively with digital adoption. Among digital practices, \emph{Open source solutions} ($0.111^{***}$), \emph{Client information management} ($0.229^{***}$), and \emph{Online advertising} (\emph{Paid}: $0.055^{**}$; \emph{Free}: $0.113^{***}$) are consistently associated with higher BDUS levels.\par 

Certain industry sectors register negative coefficients: \emph{Mining/Utilities/Construction} 

\noindent($-0.178^{***}$), \emph{Wholesale/Retail/Transport} ($-0.073^{**}$), and \emph{Education/Health} ($-0.060^{***}$). In contrast, \emph{Manufacturing} ($0.040^{***}$) and \emph{Arts/Accommodation/Food} ($0.155^{***}$) both exceed the \emph{Professional services} benchmark.\par 

For cyber security measures, the share of \emph{female employees in ICT roles} ($0.002^{***}$) is the only strongly significant predictor, indicating that a higher percentage of women in ICT is associated with greater digital adoption. The test of the variance-parameter nullity, $\sigma=0$, is strongly rejected, indicating that systematic inefficiency dominates random noise. This implies that much of the under-adoption of digital tools is associated with structural constraints rather than chance.
\par

These results imply that resource capacity (firm size), remote-work arrangements, and the availability of specific digital tools or managerial practices (e.g., open source solutions, client information management) are associated with greater proximity to the feasible digital-usage frontier for Canadian firms. Industries such as Construction and Education/Health appear to face structural adoption barriers, while Manufacturing and consumer-facing sectors integrate digital tools more readily. Although extensive digital adoption may heighten cyber risks (Section~\ref{sub: adopt_risk}), most of the cyber security variables analyzed here do not notably influence how far along a firm is on the adoption curve.

\subsection{Technological Efficiency of Canadian Businesses}\label{sub:efficiency}

The SFA analysis in Section~\ref{sub:SFA} measures the extent of digital adoption but does not capture how effectively these tools are utilized. A firm may adopt many technologies yet struggle to integrate them, or vice versa. To address this distinction, we introduce a binary variable for \textit{Technological Efficiency}, derived through $k$-means clustering. This clustering algorithm categorizes firms based on whether they report few or many operational difficulties in using adopted technologies. A firm is classified as ``Technologically Efficient'' if it experiences relatively few challenges across multiple domains (see Section~\ref{sec: Data} for details). We restrict the sample to businesses with BDUS $\geq 2$, ensuring a baseline level of digital engagement prior to evaluating efficiency.

Table~\ref{tab:debiased_lasso_results} presents the \texttt{svy LLasso} results for the probability of a firm being Technologically Efficient. Again, the independent variables are grouped into three categories: \textit{Firm characteristics}, \textit{Digital technologies}, and \textit{Cyber security measures}. A positive coefficient indicates that the variable increases the likelihood of efficient technology use, while a negative coefficient suggests a reduced likelihood. The final two columns report the AMEs and their corresponding $p$-values.

{\small
	\begingroup
	\begin{singlespace}
		\begin{longtable}{l D{.}{.}{3} D{.}{.}{3} D{.}{.}{3} D{.}{.}{3} D{.}{.}{3}}
			\caption{Debiased Logit Lasso Estimation Results for Technological Efficiency\label{tab:debiased_lasso_results}}\\
			\toprule
			\multicolumn{1}{l}{\textbf{Variables}}  
			& \multicolumn{1}{c}{\texttt{svy LLasso}}   
			& \multicolumn{1}{c}{$\tilde{\theta}^{DB}$} 
			& \multicolumn{1}{c}{$p$-value} 
			& \multicolumn{1}{c}{$\widetilde{\mathrm{AME}}^{DB}$} 
			& \multicolumn{1}{c}{$p$-value}\\
			\midrule
			\endfirsthead

			\toprule
			\multicolumn{1}{l}{\textbf{Variables}}  
			& \multicolumn{1}{c}{\texttt{svy LLasso}}   
			& \multicolumn{1}{c}{$\tilde{\theta}^{DB}$} 
			& \multicolumn{1}{c}{$p$-value} 
			& \multicolumn{1}{c}{$\widetilde{\mathrm{AME}}^{DB}$} 
			& \multicolumn{1}{c}{$p$-value}\\
			\midrule
			\endhead
			
			\midrule
			
			\endfoot
			
			\bottomrule
			\endlastfoot

			Intercept
			& -0.060 &  0.105 & 0.606       
			&  0.122 & 0.606       \\
						[3pt]
			\multicolumn{6}{l}{\textit{Firm characteristics}}\\[3pt]
			
			Medium firm
			&  0.551 &  0.856 & <0.001^{***}   
			&  0.158 & <0.001^{***}   \\
			
			Large firm
			&  0.366 &  1.237 & <0.001^{***}   
			&  0.102 & <0.001^{***}   \\
			
			Remote work
			&  0.509 &  0.664 & <0.001^{***}   
			&  0.139 & <0.001^{***}   \\
			
			Female in ICT roles (1--20\%)
			&   .    &  0.486 & 0.254       
			&  0.023 & 0.720       \\
			
			Female in ICT roles (21--40\%)
			&   .    &  0.150 & 0.688       
			&  0.139 & 0.153       \\
			
			Female in ICT roles (41--60\%)
			&  0.211 &  1.066 & 0.087^{*}      
			&  0.123 & 0.379       \\
			
			Female in ICT roles ($>\,$60\%)
			&   .    &  0.915 & 0.273       
			& -0.032 & 0.214       \\
			
			Foreign market
			&   .    &  0.041 & 0.873       
			&  0.031 & 0.566       \\
			
			Mining/Utilities/Construction
			& -0.772 & -1.108 & <0.001^{***}   
			&  0.020 & 0.458       \\
			
			Manufacturing
			&   .    &  0.130 & 0.411       
			& -0.065 & 0.044^{**}     \\
			
			Wholesale/Retail/Transport
			& -0.300 & -0.416 & 0.024^{**}     
			& -0.030 & 0.392       \\
			
			Education/Health
			&   .    & -0.193 & 0.354       
			&  0.113 & 0.006^{***}     \\
			
			Arts/Accommodation/Food
			&  0.256 &  0.808 & 0.002^{***}    
			&  0.146 & <0.001^{***}    \\
			
			Other services
			&  0.524 &  1.086 & <0.001^{***}   
			&  0.146 & <0.001^{***}    \\
			[6pt]
			
			\multicolumn{6}{l}{\textit{Digital practices}}\\[3pt]
			
			Blockchain usage
			&  0.086 &  1.066 & 0.153       
			&  0.046 & 0.428       \\
			
			ICT training
			&  0.045 &  0.314 & 0.374       
			&  0.070 & 0.327       \\
			
			Online orders
			&   .    & -0.211 & 0.166       
			&  0.006 & 0.884       \\
			
			AI
			&   .    &  0.205 & 0.520       
			&  0.264 & <0.001^{***}   \\
			
			IoT
			&  1.765 &  1.971 & <0.001^{***}   
			&  0.156 & <0.001^{***}   \\
			
			Computer network
			&  0.661 &  0.989 & <0.001^{***}   
			& -0.001 & 0.968       \\
			
			Customer relationship management
			&   .    & -0.008 & 0.965       
			& -0.141 & <0.001^{***}   \\
			
			Electronic data interchange
			& -0.680 & -0.883 & <0.001^{***}   
			& -0.101 & 0.007^{**}     \\
			
			Enterprise resource planning
			&   .    & -0.636 & 0.005^{***}     
			&  -0.168 & 0.050^{**}      \\
			
			Big data usage
			&  0.365 &  1.350 & 0.019^{**}     
			&  0.076 & 0.012^{**}     \\
			
			Open source technologies
			&  0.288 &  0.519 & 0.007^{***}     
			& -0.022 & 0.730       \\
			
			Advertising
			& -0.327 & -0.440 & 0.010^{**}     
			&  0.110 & <0.001^{***}   \\
			
			Free advertising
			&  0.431 &  0.740 & <0.001^{***}   
			&  0.073 & 0.025^{**}     \\
			
			Website
			&  0.399 &  0.454 & 0.010^{**}     
			& -0.006 & 0.852       \\
			
			Company apps
			&   .    & -0.041 & 0.839       
			& -0.087 & <0.001^{***}   \\
			
			Social media
			& -0.287 & -0.587 & <0.001^{***}   
			& -0.027 & 0.240       \\
			
			Fiber optic
			&   .    & -0.179 & 0.198       
			& -0.046 & 0.113       \\
			
			Online sales
			& -0.064 & -0.300 & 0.083^{*}      
			& -0.013 & 0.607       \\
			
			Client information management
			&  0.064 &  0.094 & 0.498       
			& -0.068 & 0.019^{**}     \\
			[6pt]
			
			\multicolumn{6}{l}{\textit{Cyber security measures}}\\[3pt]
			
			Female in cyber security roles (1--20\%)
			&   .   & -0.141 & 0.708       
			& -0.012 & 0.815       \\
			
			Female in cyber security roles (21--40\%)
			&   .   & -0.077 & 0.799       
			&  0.030 & 0.443       \\
			
			Female in cyber security roles (41--60\%)
			&   .   &  0.202 & 0.391       
			&  0.049 & 0.166       \\
			
			Female in cyber security roles ($>\,$60\%)
			&  0.062 &  0.329 & 0.123       
			& -0.019 & 0.487       \\
			
			Cyber security employees (1--2)
			&   .    & -0.127 & 0.445       
			& -0.003 & 0.946       \\
			
			Cyber security employees (3+)
			&   .    & -0.020 & 0.941       
			& -0.014 & 0.622       \\
			
			Cyber security insurance
			&   .    & -0.090 & 0.590       
			&  0.014 & 0.537       \\
			
			Employee monitoring
			&   .    & -0.087 & 0.574       
			& -0.184 & <0.001^{***}   \\
			
		\end{longtable}
\noindent\footnotesize{\textit{Notes:} All numeric values are rounded to three decimals. $\tilde{\theta}^{DB}$ and $\widetilde{\mathrm{AME}}^{DB}$ denote the debiased logit Lasso coefficient and AME estimates, respectively. Significance levels: *** $p<0.01$, ** $p<0.05$, * $p<0.10$. Reference categories: Small firm, 0\% female in ICT roles, 0\% female in cyber security roles, 0 cyber security employees, and Industry: Professional services.}
\end{singlespace}
\endgroup
}
\bigskip
Table~\ref{tab:debiased_lasso_results} shows that several firm characteristics are significantly associated with higher probability of using digital tools efficiently.  \emph{Medium} ($\widetilde{\mathrm{AME}}^{DB} =0.158^{***}$) and \emph{Large} ($\widetilde{\mathrm{AME}}^{DB} =0.102^{***}$) firms exhibit higher AMEs relative to \emph{Small} firms. \emph{Remote work} ($\widetilde{\mathrm{AME}}^{DB} =0.139^{***}$) is also associated with increased efficiency. Among digital practices, the \emph{IoT} ($\widetilde{\mathrm{AME}}^{DB} =0.156^{***}$) and \emph{Big data usage} ($\widetilde{\mathrm{AME}}^{DB} =0.076^{**}$) both show strong positive associations, while advertising efforts, such as \emph{Free advertising} ($\widetilde{\mathrm{AME}}^{DB} =0.073^{**}$), also correlate with efficient usage. For industries, \emph{Arts/Accommodation/Food} ($\widetilde{\mathrm{AME}}^{DB} =0.146^{***}$) and \emph{Education/Health} show a positive marginal effect ($\widetilde{\mathrm{AME}}^{DB} =0.113^{***}$) compared to \emph{Professional services}. \par 

Firms endowed with more resources, due to scale (medium or large size) or operational flexibility (remote work) can allocate staff and capital to integrate digital systems more effectively. Real-time connectivity from IoT and big data appears to reinforce structured work flows, while advertising activities may align with better-organized digital platforms. Consumer-facing industries, such as Arts/Accommodation/Food may capitalize on digital marketing tools or consumer facing technologies such as reservation systems more readily than sectors facing heavier regulatory or operational barriers. \par 

The variables associated with reduced likelihood of efficient digital implementation include
\emph{Electronic data interchange} ($\widetilde{\mathrm{AME}}^{DB} =-0.101^{***}$) and \emph{Enterprise resource planning} ($\widetilde{\mathrm{AME}}^{DB} =-0.168^{**}$). The AME values suggest that more complex systems can pose challenges with technological adoption.  The \emph{Manufacturing} industry ($\widetilde{\mathrm{AME}}^{DB} =-0.065^{**}$) has a negative association based on AMEs, while certain cyber security variables, such as \emph{Employee monitoring} ($\widetilde{\mathrm{AME}}^{DB} =-0.184^{***}$), also correlate negatively with Technological Efficiency. \par

These negative or insignificant AMEs point to organizational or regulatory factors that can undermine digital adoption benefits. Complex software solutions (EDI, ERP) often require robust training and IT resources; without sufficient support, firms may experience integration hurdles. Industries like Mining/Utilities/Construction may face specialized work flows or regulatory strictures that may obstruct rapid digital platform adoption. Certain cyber security practices (employee monitoring) can introduce procedural friction or negative employment sentiment that overshadows efficiency gains if not carefully managed.

These findings carry several policy implications. First, programs supporting digital adoption should account for firm size: while small firms may require assistance with initial adoption, medium and large firms benefit from resources targeting effective implementation. Second, the positive association between remote work and technological efficiency suggests that policies facilitating flexible work arrangements may yield collateral benefits for digital integration. Third, the negative associations with complex enterprise systems (EDI, ERP) indicate that adoption support programs should include implementation assistance and training, not merely subsidies for technology acquisition. Finally, the negative association between employee monitoring and efficiency suggests that heavy-handed digital surveillance practices may undermine the productivity gains that digital adoption is intended to deliver.

\subsection{Cyber Security Incidence} \label{sub: Cyber_incident}

The determinants of \textit{Cyber Security Incidence} variable are analyzed using a binary dependent variable introduced in Section~\ref{sec: Data}. The \textit{Cyber Security Incidence} variable equals 1 if a firm reported experiencing at least one cyber security incident during 2021, and 0 otherwise; among the surveyed firms, 18.0\% reported an incident. Cyber incidents encompass a range of adverse events, including theft of business assets, data breaches, disruptions to business activities, intellectual property losses, and other cyber-related issues. Although a large number of independent variables are included in the analysis, relatively few emerge as statistically significant predictors, indicating that cyber risk is shaped by a limited subset of factors. \par 

{\small
\begin{singlespace}
	\begin{longtable}{l D{.}{.}{3} D{.}{.}{3} D{.}{.}{3} D{.}{.}{3} D{.}{.}{3}}
		\caption{\texttt{svy LLasso} Results for Cyber Security Incidence \label{tab:debiased_lasso_results_additional}}\\
		\toprule
		\multicolumn{1}{l}{\textbf{Variables}}  
		& \multicolumn{1}{c}{\texttt{svy LLasso}} 
		& \multicolumn{1}{c}{$\tilde{\theta}^{DB}$} 
		& \multicolumn{1}{c}{$p$-value} 
		& \multicolumn{1}{c}{$\widetilde{\mathrm{AME}}^{DB}$} 
		& \multicolumn{1}{c}{$p$-value} \\
		\midrule
		\endfirsthead

		\toprule
		\multicolumn{1}{l}{\textbf{Variables}}  
		& \multicolumn{1}{c}{\texttt{svy LLasso}} 
		& \multicolumn{1}{c}{$\tilde{\theta}^{DB}$} 
		& \multicolumn{1}{c}{$p$-value} 
		& \multicolumn{1}{c}{$\widetilde{\mathrm{AME}}^{DB}$} 
		& \multicolumn{1}{c}{$p$-value} \\
		\midrule
		\endhead
		
		\bottomrule
		\endfoot
		
		\bottomrule
		\endlastfoot
		
		Intercept                    & -1.904 & -2.989 & <0.001^{***} &   .   &   .        \\[4pt]
				\multicolumn{6}{l}{\textit{Firm characteristics}}\\[4pt]
		Medium firm                    &.        &  0.047 &  0.716    &  0.007 &  0.723    \\
		Large firm                     &.       &  0.370 &  0.038^{**}  &  0.057 &  0.030^{**}  \\
		Remote work                    &.        &  0.104 &  0.476    &  0.015 &  0.488    \\
		Female in ICT roles (1--20\%) &.        & -0.325 &  0.236    & -0.042 &  0.291    \\
		Female in ICT roles (21--40\%)&.        &  0.389 &  0.178    &  0.060 &  0.155    \\
		Female in ICT roles (41--60\%)&.        &  0.123 &  0.781    &  0.018 &  0.783    \\
		Female in ICT roles ($>\,$60\%) &.        & -0.305 &  0.660    & -0.040 &  0.694    \\
		Mining/Utilities/Construction &.        &  0.728 &  0.002^{***} &  0.119 &  0.001^{***} \\
		Manufacturing                 &.        &  0.464 &  0.002^{***} &  0.071 &  0.001^{***} \\
		Wholesale/Retail/Transport    &.        &  0.398 &  0.034^{**}  &  0.059 &  0.032^{**}  \\
		Education/Health              &.        &  0.058 &  0.771    &  0.008 &  0.777    \\
		Arts/Accommodation/Food       &.        &  0.235 &  0.394    &  0.034 &  0.394    \\
		Other services                &.        &  0.484 &  0.083^{*}   &  0.075 &  0.066^{*}   \\[6pt]
		
		\multicolumn{6}{l}{\textit{Digital practices}}\\[4pt]
		Blockchain usage              &.        &  0.202 &  0.667    &  0.030 &  0.663    \\
		ICT training                  &.        &  0.361 &  0.168    &  0.055 &  0.151    \\
		Online orders                 &.        &  0.032 &  0.835    &  0.005 &  0.841    \\
		AI       &.        & -0.289 &  0.207    & -0.038 &  0.255    \\
		IoT            &.        &  0.148 &  0.336    &  0.021 &  0.345    \\
		Computer network              &.        &  0.016 &  0.909    &  0.002 &  0.912    \\
		Customer relationship management
		&        &  0.129 &  0.419    &  0.019 &  0.426    \\
		Electronic data interchange   &.        & -0.176 &  0.293    & -0.024 &  0.323    \\
		Enterprise resource planning  &.        &  0.137 &  0.477    &  0.020 &  0.480    \\
		Big data usage                &.        & -0.414 &  0.307    & -0.053 &  0.374    \\
		Open source technologies      &.        & -0.102 &  0.528    & -0.014 &  0.550    \\
		Confidential cloud            &  0.161 &  0.466 & <0.001^{***} &  0.067 & 0.002^{***}  \\
		Personal device               &.        &  0.222 &  0.120    &  0.031 &  0.141    \\
		VPN                           &.        &  0.003 &  0.986    &  0.000 &  0.986    \\
		Payment services              &.        & -0.123 &  0.663    & -0.017 &  0.682    \\
		Client information management &.        &  0.082 &  0.569    &  0.012 &  0.582    \\
		Website                       &.        & -0.098 &  0.616    & -0.014 &  0.624    \\
		Company apps                  &.        &  0.112 &  0.556    &  0.016 &  0.563    \\
		Social media                  &.        & -0.173 &  0.255    & -0.025 &  0.265    \\
		Online sales                  &.        &  0.072 &  0.663    &  0.010 &  0.673    \\[6pt]
		
		\multicolumn{6}{l}{\textit{Cyber security measures}}\\[4pt]
		Anti malware                  &.        & -0.236 &  0.323    & -0.034 &  0.327    \\
		Web security                  &.        & -0.115 &  0.502    & -0.016 &  0.517    \\
		Email security                &.        &  0.272 &  0.271    &  0.037 &  0.304    \\
		Network security              &.        &  0.087 &  0.691    &  0.012 &  0.704    \\
		Data security                 &.        &  0.164 &  0.362    &  0.023 &  0.374    \\
		POS security                  &.        & -0.115 &  0.502    & -0.016 &  0.522    \\
		Software security             &.        & -0.251 &  0.171    & -0.034 &  0.198    \\
		Hardware security             &.        &  0.203 &  0.259    &  0.029 &  0.266    \\
		Password security             &  0.159 &  0.267 &  0.143    &  0.038 &  0.157    \\
		Access security               &.        &  0.008 &  0.963    &  0.001 &  0.964    \\
		Female in cyber security roles (1--20\%)
		&.        & -0.042 &  0.904    & -0.006 &  0.908    \\
		Female in cyber security roles (21--40\%)
		&.        &  0.062 &  0.819    &  0.009 &  0.823    \\
		Female in cyber security roles (41--60\%)
		&.        &  0.188 &  0.434    &  0.028 &  0.434    \\
		Female in cyber security roles ($>\,$60\%)
		&.        & -0.424 &  0.054^{*}   & -0.056 &  0.082^{*}    \\
		
		Cyber security employees (1--2)
		&   .    & 0.527 & 0.061^{*}     
		& 0.074 & 0.070^{*}     \\
		
		Cyber security employees (3+)
		&   .    & 0.059 & 0.737       
		& 0.008 & 0.743       \\
		
		Cyber security insurance       &.        & -0.321 &  0.050^{**}  & -0.043 &  0.073^{*}   \\
		Cyber consultant               &.        & -0.018 &  0.941    & -0.003 &  0.943    \\
		Cyber information              &  0.295 &  0.214 &  0.450    &  0.030 &  0.459    \\
		Cyber training                 &.        &  0.195 &  0.254    &  0.028 &  0.259    \\
		Cyber policy                   &.        & -0.108 &  0.507    & -0.015 &  0.527    \\
		Cyber practice                 &  0.051 &  0.254 &  0.331    &  0.036 &  0.346    \\
		Employee monitoring            &.        &  0.242 &  0.122    &  0.035 &  0.123    \\
		Risk assessment                &.        &  0.199 &  0.219    &  0.029 &  0.226    \\
		Cyber team: white employees only          &.        & -0.049 &  0.785    & -0.007 &  0.794    \\
		Cyber team: minority employees only           &.        &  0.424 &  0.152    &  0.065 &  0.131    \\
		Cyber certification            &.        &  0.096 &  0.570    &  0.014 &  0.580    \\
		No cyber security measures     &.        & -0.179 &  0.634    & -0.024 &  0.657    \\
\end{longtable}
\noindent\footnotesize{\textit{Notes:} All numeric values are rounded to three decimals. $\tilde{\theta}^{DB}$ and $\widetilde{\mathrm{AME}}^{DB}$ denote the debiased logit Lasso coefficient and AME estimates, respectively. Significance levels: *** $p<0.01$, ** $p<0.05$, * $p<0.10$. Reference categories: Small firm, 0\% female in ICT roles, 0\% female in cyber security roles, 0 cyber security employees, cyber team: diverse employees. }
\end{singlespace}
}
\bigskip

Table~\ref{tab:debiased_lasso_results_additional} displays the svy LLasso estimates 
for the probability of experiencing a cyber security incident. 
Despite the large number of candidate predictors, cyber risk 
appears concentrated in a limited set of factors. Among firm 
characteristics, only large firm size 
($\widehat{\text{AME}}^{\text{DB}} = 0.057^{**}$) is statistically 
significant, while medium firms show no elevated risk relative to 
small firms. Several industry categories---Mining/Utilities/%
Construction, Manufacturing, Wholesale/Retail/Transport, and Other 
services---exhibit statistically significant positive coefficients 
relative to Professional services. Among digital practices, the 
only significant predictor is the use of confidential cloud 
solutions ($\widehat{\text{AME}}^{\text{DB}} = 0.067^{***}$); no 
other technology variable is statistically significant at 
conventional levels.

Two factors are associated with a lower likelihood of experiencing 
an incident: having over 60\% female employees in cyber security 
roles ($\widehat{\text{AME}}^{\text{DB}} = -0.056^{*}$) and 
holding cyber security insurance 
($\widehat{\text{AME}}^{\text{DB}} = -0.043^{*}$), though both 
are significant only at the 10\% level.

The sparsity of significant predictors is itself informative. The 
broad array of specific security measures included in the 
model -- anti-malware, email security, network security, access 
controls, and others -- are individually insignificant, suggesting 
that the presence or absence of any single protective technology 
does not meaningfully predict incidence at the firm level. Rather, 
the results point to firm scale and industry as the dominant 
correlates of cyber risk, with cloud storage as the only specific 
technology that is significantly associated with higher incidence. 
These patterns are consistent with the interpretation that larger 
firms and those in asset-intensive industries present larger attack 
surfaces and more valuable targets, while the protective 
associations of insurance and workforce diversity in cyber security 
roles warrant further investigation with designs capable of 
establishing causation.

\subsection{Interaction Effects} \label{sub: Interactions}

Using an adaptive Lasso approach with polynomial expansions, we investigate whether second-order (interaction) terms enhance the predictive accuracy of our \textit{Technological Efficiency} and \textit{Cyber Security Incidence} specifications. We estimate both a first-order (linear) model and a second-order model that includes all pairwise interactions among the explanatory variables (see \cite{Buhlmann-vandeGeer(2011)} for the theoretical background).

Table~\ref{tab:cv_polywog} reports mean-squared cross-validation (CV) errors under each specification, along with the penalty parameter $\lambda_{\mathrm{cv}}$.  For the \emph{Technological Efficiency} model, allowing second-order terms substantially lowers the CV error, suggesting that interactions play an important role in explaining efficiency gains from digital technologies. In the \emph{Cyber Security Incidence} model, the simpler first-order specification yields a slightly lower CV error, indicating that higher-order interactions do not improve predictions of cyber security incidence likelihood.\par 

A second-order polynomial specification is justified for the \emph{Technological Efficiency} model, therefore we re-estimate the \texttt{svy LLasso} model with interaction terms involving firm size, remote work, key technologies, and industry classifications. Table~\ref{tab:interaction} shows the significant interaction terms selected by the \texttt{svy LLasso} estimator, along with their debiased parameter estimates \(\tilde{\theta}^{DB}\) and \(p\)-values. Only the interaction coefficients that were selected by the Lasso and statistically significant at the 5\% level are included in Table \ref{tab:interaction}.\par

\begin{table}[ht!]
	\begin{center}
	\caption{Cross-Validation Results for Models With and Without Interaction Terms}
	\label{tab:cv_polywog}
		\begin{tabular}{l c c c c}
			\toprule
			\textbf{Model} & \textbf{Degree} & \(\lambda_{\mathrm{cv}}\) & \textbf{CV Error} & \textbf{Preferred Specification} \\
			\midrule
			\multirow{2}{*}{\textit{Technological Efficiency}} 
			& 1 & 0.00261 & 0.92442 &  \\
			& 2 & 0.00029 & 0.34142 & Second-order \\
			\midrule
			\multirow{2}{*}{\textit{Cyber Security Incidence}} 
			& 1 & 0.00685 & 0.87247 & First-order \\
			& 2 & 0.03132 & 0.86486 &  \\
			\bottomrule
		\end{tabular}
\end{center}
\footnotesize{\emph{Notes:} The table shows the mean-squared CV error from an adaptive Lasso specification with polynomial expansions of different degrees (1 = linear, 2 = second-order interactions). For each model, \(\lambda_{\mathrm{cv}}\) denotes the penalty parameter that minimizes the CV error. Based on these metrics, the second-order polynomial is preferred for the Technological Efficiency model, while a first-order specification is preferred for the Cyber Security Incidence model.}
\end{table}

{\small
	\setlength{\tabcolsep}{6pt}
	\begin{singlespace}
			\begin{longtable}{%
					l
					@{\extracolsep{0.01em}} D{.}{.}{3}
					@{\extracolsep{1.3em}} D{.}{.}{3}
					@{\extracolsep{1.3em}} D{.}{.}{3}
				}
				
				\caption{\texttt{svy LLasso} with Interactions: Technological Efficiency \label{tab:interaction}}\\
				
				\toprule
				\multicolumn{1}{l}{\textbf{Variables}}
				& \multicolumn{1}{c}{\textbf{Lasso}}
				& \multicolumn{1}{c}{$\tilde{\theta}^{DB}$}
				& \multicolumn{1}{c}{\textbf{p-value}}\\
				\midrule
				\endfirsthead

				\toprule
				\multicolumn{1}{l}{\textbf{Variables}}
				& \multicolumn{1}{c}{\textbf{Lasso}}
				& \multicolumn{1}{c}{$\tilde{\theta}^{DB}$}
				& \multicolumn{1}{c}{\textbf{p-value}}\\
				\midrule
				\endhead
				
				\bottomrule
				\endfoot
				
				\bottomrule
				\endlastfoot

				Intercept                                       & 1.095 & 3.953 & 0.012^{**}     \\
				Remote work                                         & 2.641 & 4.295 & 0.002^{***}    \\
				Online orders                                       & -2.008 & -4.555 & <0.001^{***}  \\
				IoT                                  & 2.249 & 4.873 & 0.009^{***}    \\
				Computer network                                    & -0.369 & -3.173 & 0.006^{***}    \\
				Fiber optic                                         & -1.197 & -3.019 & 0.009^{***}    \\
				Medium firm $\times$ Remote work                    & -0.884 & -1.428 & 0.040^{**}     \\
				Medium firm $\times$ IoT             & 1.901 & 5.554 & <0.001^{***}   \\
				Medium firm $\times$ Cyber security insurance       & -0.470 & -1.257 & 0.009^{***}    \\
				Medium firm $\times$ Website                        & 1.878 & 2.934 & 0.004^{***}     \\
				Medium firm $\times$ Company apps                   & -2.170 & -4.019 & 0.002^{***}    \\
				Medium firm $\times$ Manufacturing                  & -2.471 & -5.658 & <0.001^{***}  \\
				Medium firm $\times$ Other services                 & -2.677 & -8.226 & <0.001^{***}  \\
				Remote work $\times$ Female in ICT roles (1--20\%)  & 0.024 & -16.870 & 0.008^{***}    \\
				Remote work $\times$ CRM                            & 0.340 & 2.496 & 0.018^{**}      \\
				Remote work $\times$ Open source technologies       & -1.689 & -2.598 & 0.005^{***}    \\
				Remote work $\times$ Website                        & -1.864 & -3.230 & 0.003^{***}    \\
				Remote work $\times$ Social media                   & 2.361 & 3.824 & <0.001^{***}   \\
				ICT training $\times$ Foreign market                & 0.298 & 7.453 & 0.002^{***}     \\
				ICT training $\times$ CIM                           & -0.240 & -4.743 & 0.002^{***}    \\
				Female in ICT roles (1--20\%)  $\times$ Open source & 0.136 & 9.104 & 0.005^{***}     \\
				Large firm $\times$ CRM                             & -1.116 & -4.636 & 0.004^{***}    \\
				Online orders $\times$ Computer network             & 1.278 & 2.872 & 0.004^{***}     \\
				Online orders $\times$ ERP                          & -0.047 & -3.533 & 0.002^{***}    \\
				Online orders $\times$ Social media                 & 0.939 & 3.102 & 0.002^{***}     \\
				Online orders $\times$ Manufacturing                & 0.453 & 3.207 & 0.001^{***}    \\
				Foreign market $\times$ EDI                         & 1.704 & 5.321 & 0.003^{***}     \\
				Foreign market $\times$ CIM                         & -0.622 & -2.730 & 0.035^{**}     \\
				IoT $\times$ Computer network        & -0.890 & -3.443 & <0.001^{***}  \\
				IoT $\times$ CIM                     & 0.663 & 2.285 & 0.032^{**}      \\
				IoT $\times$ Manufacturing           & -2.497 & -4.684 & 0.002^{***}    \\
				Computer network $\times$ CRM                       & -2.041 & -2.983 & 0.006^{***}    \\
				Computer network $\times$ Advertising               & -0.191 & -3.003 & 0.018^{**}      \\
				Computer network $\times$ Fiber optic               & 1.525 & 3.576 & <0.001^{***}   \\
				Computer network $\times$ Other services            & -0.815 & -5.782 & 0.039^{**}     \\
				CRM $\times$ CIM                                    & -0.105 & -2.083 & 0.015^{**}      \\
				CRM $\times$ Advertising                            & -1.041 & -2.367 & 0.035^{**}      \\
				EDI $\times$ Advertising                            & -4.335 & -5.309 & <0.001^{***}  \\
				EDI $\times$ Online sales                           & -2.325 & -3.303 & 0.006^{***}     \\
				ERP $\times$ Manufacturing                           & 0.953 & 2.583 & 0.038^{**}      \\
				ERP $\times$ Wholesale/Retail/Transport             & 2.113 & 3.094 & 0.034^{**}      \\
				CIM $\times$ Mining/Utilities/Construction          & 1.256 & 6.701 & <0.001^{***}   \\
				CIM $\times$ Education/Health                       & -4.227 & -6.218 & <0.001^{***}  \\
				Advertising $\times$ Free advertising               & 2.001 & 2.886 & 0.021^{**}      \\
				Advertising $\times$ Website                        & 1.479 & 4.818 & 0.029^{**}      \\
				Advertising $\times$ Education/Health               & -5.083 & -9.202 & <0.001^{***}  \\
				Free advertising $\times$ Website                   & -0.161 & -6.123 & 0.013^{**}      \\
				Free advertising $\times$ Social media              & -0.298 & -3.753 & 0.010^{**}     \\
				Free advertising $\times$ Wholesale/Retail/Transport & -3.066 & -3.263 & 0.038^{**}    \\
				Free advertising $\times$ Education/Health          & 3.234 & 6.648 & <0.001^{***}   \\
				Website $\times$ Online sales                       & 1.572 & 6.424 & 0.011^{**}     \\
				Website $\times$ Mining/Utilities/Construction      & -1.338 & -4.640 & <0.001^{***}  \\
				Company apps $\times$ Social media                  & 5.323 & 10.627 & <0.001^{***}   \\
				Company apps $\times$ Manufacturing                 & -0.974 & -3.853 & 0.030^{**}     \\
				Company apps $\times$ Arts/Accommodation/Food       & -0.883 & -9.426 & 0.033^{**}     \\
				Social media $\times$ Wholesale/Retail/Transport    & -2.163 & -4.058 & <0.001^{***}  \\
				Fiber optic $\times$ Arts/Accommodation/Food        & 0.388 & 4.972 & 0.049^{**}      \\
				Fiber optic $\times$ Other services                 & 2.051 & 9.419 & <0.001^{***}   \\
				Online sales $\times$ Mining/Utilities/Construction & 2.554 & 8.200 & <0.001^{***}   \\
				Online sales $\times$ Manufacturing                 & -0.384 & -2.825 & 0.007^{***}    \\
				Online sales $\times$ Other services                & -1.933 & -9.947 & <0.001^{***}  \\
			\end{longtable}
\noindent \footnotesize{\emph{Notes}: Numeric values are rounded to three decimal places. $\tilde{\theta}^{DB}$ denotes the debiased logit Lasso coefficient estimate. Coefficients statistically significant at the 5\% level based on their $p$-values are reported. Significance levels: *** $p<0.01$, ** $p<0.05$, * $p<0.10$. Variable names are abbreviated: CRM (Customer Relationship Management), CIM (Client Information Management), EDI (Electronic Data Interchange), ERP (Enterprise Resource Planning). }
\end{singlespace}
}
\bigskip

Medium-sized firms exhibit positive and statistically significant interactions with the adoption of the IoT (\emph{Medium firm} $\times$ \emph{IoT}: $\tilde{\theta}^{DB}=5.554^{***}$) and firm website use (\emph{Medium firm} $\times$ \emph{Website}: $\tilde{\theta}^{DB}=2.934^{***}$). Remote work arrangements positively interact with social media (\emph{Remote work} $\times$ \emph{Social media}: $\tilde{\theta}^{DB}=3.824^{***}$) and customer relationship management software (\emph{Remote work} $\times$ \emph{CRM}: $\tilde{\theta}^{DB}=2.496^{**}$). \par 

The largest statistically significant positive interaction occurs between company-specific applications and social media use (\emph{Company apps} $\times$ \emph{Social media}: $\tilde{\theta}^{DB}=10.627^{***}$). The presence of female employees in ICT roles (1–20\%) interacts positively and significantly with open-source technology adoption (\emph{Female in ICT roles (1–20\%)} $\times$ \emph{Open source}: $\tilde{\theta}^{DB}=9.104^{***}$). Online sales and the Mining, Utilities, and Construction industry have a positive statistically significant interaction (\emph{Online sales} $\times$ \emph{Mining/Utilities/Construction}: $\tilde{\theta}^{DB}=8.200^{***}$). Additionally, free advertising positively interacts with firms in the Education and Health industry (\emph{Free advertising} $\times$ \emph{Education/Health}: $\tilde{\theta}^{DB}=6.648^{***}$). \par 

The interaction between EDI and advertising is negative and statistically significant (\emph{Electronic data interchange} $\times$ \emph{Advertising}: $\tilde{\theta}^{DB}=-5.309^{***}$). Similarly, medium-sized firms show a negative and statistically significant interaction with company-specific applications (\emph{Medium firm} $\times$ \emph{Company apps}: $\tilde{\theta}^{DB}=-4.019^{***}$). Remote work arrangements negatively interact with open-source technologies (\emph{Remote work} $\times$ \emph{Open source}: $\tilde{\theta}^{DB}=-2.598^{***}$). Online sales exhibit a negative interaction with firms in the Wholesale, Retail, and Transport industries (\emph{Online sales} $\times$ \emph{Wholesale/Retail/Transport}: $\tilde{\theta}^{DB}=-4.058^{***}$). \par

Medium-sized firms benefit from adopting IoT solutions and online platforms, likely due to greater resource availability compared to smaller firms. Remote work is positively associated with the efficiency of communication-oriented technologies, such as CRM and social media. The remote work variable itself also exhibits a strong, statistically significant positive effect on firm technological efficiency. Workforce diversity in ICT roles is positively correlated with technological efficiency, especially when adopting open-source systems. Industry-specific interactions produce varied effects depending on the technology and sector: online sales positively interact with mining, utilities, and construction, while advertising has a positive association in the education and health sectors. Conversely, interactions like online sales with wholesale, retail, and transport are negatively associated with firm technological efficiency.

\section{Conclusion}
\label{sec:conclusion}

This paper contributes new evidence on how Canadian businesses navigate the trade-off between digital technology adoption and heightened cyber security risk. Using data from Statistics Canada's SDTIU and CSCSC surveys, we construct a BDUS to gauge overall adoption levels and then evaluate how effectively businesses use these tools by modeling their technological efficiency. In addition, we employ a survey-weight-adjusted Lasso estimator and introduce a debiasing method for high-dimensional logit models to identify the predictors of technological efficiency and cyber security risk. 

The stochastic frontier analysis suggests that larger firms, remote work arrangements, and specific digital practices (e.g., open-source solutions, client information management systems)  are associated with proximity to the digital ``frontier.'' At the same time, a portion of businesses lag behind feasible adoption levels, as indicated by the high ratio of inefficiency to noise in the frontier estimations. \par 

Firms must balance efficiency gains against growing cyber vulnerabilities. Firms that adopt more sophisticated digital technologies or store sensitive data in the cloud often face elevated risk exposure. Our \texttt{svy LLasso} model on cyber incidence confirm that large firms and those using cloud-based services are more likely to report cyber security incidents. However, the predictive power of cyber risk does not improve with second-order interaction effects, suggesting that firm size and core technological choices are the primary predictors of cyber exposure. The two main variables associated with lower likelihood of cyber incidents were firms having cyber security insurance and firms that had a high representation of females in cyber security roles. \par

When it comes to technological efficiency simple linear relationships fail to capture the complexity of how organizational choices, workforce composition, and industry shape digital outcomes. By allowing second-order (interaction) terms, the  \texttt{svy LLasso} approach shows that certain combinations of variables such as \emph{medium firms} adopting IoT, or \emph{female ICT representation} interacting with advanced tools like AI can be particularly conducive to efficiency improvements. On the other hand, friction in implementing complex software like EDI or ERP can negate some of these benefits. \par 

The analysis demonstrates the importance of firm size and industry. While small firms are sometimes more ``locally efficient,” the resource advantages of larger organizations may facilitate deeper or more comprehensive integration of technologies. Industries also differ substantially. In resource- and asset-intensive sectors such as Mining or Construction, strong positive interactions emerge between targeted digital solutions and improved operational processes, whereas compliance-heavy fields like Education and Health exhibit more negative or complex relationships. \par 

Gender composition in ICT roles has meaningful associations with digital adoption outcomes. Although our results do not prove a causal mechanism, the recurring positive coefficients on interactions involving a share of female ICT staff and advanced technologies suggest that even partial gender diversity in technical teams is associated with higher returns to adopting new tools. This pattern is also seen in broader research suggesting that heterogeneity in skill sets and perspectives can catalyze creative problem-solving. \par

The results highlight the balance firms must strike between achieving efficiency gains from digital technologies and managing increased cyber risks. Larger, digitally advanced firms approach their efficiency frontier yet face elevated cyber vulnerability, especially when security practices fall short. Remote work arrangements are associated with both higher digital adoption and technological efficiency, without increasing cyber risk exposure. Female representation in ICT and cyber security roles is consistently associated with better outcomes across adoption, efficiency, and cyber security. Certain cyber security practices, particularly obtaining cyber security insurance and ensuring gender diversity within cyber security roles significantly reduce incident likelihood. Reliance on cloud-based services, notably confidential cloud storage, emerges as a risk factor associated with higher cyber vulnerability. These results suggest policymakers should implement targeted digital strategies tailored by industry and firm size to boost technological efficiency, while simultaneously establishing baseline cyber security practices such as insurance coverage and workforce diversity to mitigate cyber threats effectively.

\subsubsection*{Limitations}
Several limitations warrant acknowledgment. Our cross-sectional design captures associations at a single point in time and cannot establish causal relationships or trace the dynamics of adoption and cyber security investment. The relationships we document may reflect reverse causality. For 
example, firms that experience a cyber security incident may 
subsequently invest in more sophisticated digital tools or adopt 
additional security measures, generating a positive association 
between incidence and the practices we treat as predictors. Our 
cross-sectional design cannot distinguish these temporal sequences. 
The results may also be confounded by omitted firm-level 
heterogeneity, such as managerial sophistication or risk 
preferences, which likely affect both digital adoption decisions 
and the quality of cyber security practices but are unobserved in 
our data. 

Additionally, survey-based measures of digital adoption and cyber incidents may be subject to reporting biases, as firms might underreport security breaches due to reputational concerns or overreport technology adoption due to social desirability.

\newpage
\bibliographystyle{myagsm}
\bibliography{SDTIU}

\newpage
\appendix

\section{Technical Appendix}\label{sec: tech app}
\subsection{Additional Details on the Implementation of the \texttt{svy LLasso} Procedure}\label{subsec: app1}
\paragraph*{Tuning Parameter Selection.} In both the empirical analysis and the small Monte Carlo simulations below, the logit Lasso procedure \texttt{svy LLasso} is fitted using the \texttt{R} package \texttt{glmnet}. For the tuning parameter \( \lambda \), we use the package’s default value, which is selected via 10-fold cross-validation using the loss function \texttt{auc} (area under the ROC curve).

\paragraph*{Average Marginal Effect in the Logit Model.} 
We report the marginal effects (ME) for each variable along with the coefficient estimates for logit models in the tables. Let $\Lambda(z) = \exp(z)/(1 + \exp(z))$ be the logistic distribution function. For a dummy regressor $\tilde{x}_{ij}$, where $j = 1, \dots, p$ and $i = 1, \dots, n$, the ME is defined as:
\[
\text{ME}_{ij}(\theta)\equiv\Lambda({x}_{i}'\theta)\vert_{\tilde{x}_{ij}=1}-\Lambda({x}_{i}'\theta)\vert_{\tilde{x}_{ij}=0}.
\]
Given the survey weights $\{w_i\}_{i=1}^n$ corresponding to the observations $\{(y_i, x_i')'\}_{i=1}^n$, the AME of the $j$-th regressor is given by:
\[
\text{AME}_j=\text{AME}_j(\theta_0)\equiv \E\left[\frac{1}{\sum_{i=1}^n w_i} \sum_{i=1}^n w_i \text{ME}_{ij}(\theta_0)\right],
\]
where $\theta_0$ is the true parameter value, and the expectation is with respect to the regressors' distribution. An estimator for $\text{AME}_j$ is:
\[
\widehat{\text{AME}}_j(\hat{\theta}) \equiv \frac{1}{\sum_{i=1}^n w_i} \sum_{i=1}^n w_i \left(\Lambda({x}_{i}'\hat{\theta})\vert_{\tilde{x}_{ij}=1} - \Lambda({x}_{i}'\hat{\theta})\vert_{\tilde{x}_{ij}=0}\right),
\]
where $\hat{\theta} = (\hat{\alpha}, \hat{\beta}')'$ is the estimated parameter vector, e.g., the \texttt{svy LLasso} estimator. The debiased logit Lasso estimator of $\text{AME}_j$ is then constructed using the one-step iteration provided in the equation \eqref{eq: rho DB}.

\subsection{Post-Selection Inference for Survey-GLM}\label{subsec: app2}
Consider the density of a scalar outcome variable $y_{i}$ given a $(p+1)\times 1$ vector of covariates $x_{i}$ (which includes a constant) specified as 
\begin{equation*}
	f(y_i\vert x_i,\theta_0)=\exp(y_ix_i'\theta_0-a(x_i'\theta_0))c(y_i),\quad i=1,\dots, n,
\end{equation*}
where $\theta_0$ is the true value of the parameter vector 
$\theta\in\mathbb{R}^{p+1}$, and $a(\cdot)$ and $c(\cdot)$ are known functions. 
The combined SDTIU and CSCSC data were collected using a stratified sampling scheme, wherein units within each stratum are sampled independently with equal probability. In line with this, we treat $w_i$ as fixed \citep[see][]{Wooldridge2001}, and assume $\{(y_i, x_i')'\}_{i=1}^n$ to be independent.\par
Let $g(y, x'\theta)\equiv -\log f(y, x'\theta)$ and define the weighted log-likelihood function as:
\begin{equation}\label{eq: WLL}
	L_n(\theta)\equiv -n^{-1}\sum_{i=1}^nw_i g(y_{i}, x_{i}'\theta).
\end{equation}
The score function, the sample information and negative Hessian matrices corresponding to \eqref{eq: WLL} are defined as 
\begin{align}
	S(\theta)&\equiv \frac{\partial L_n(\theta)}{\partial \theta}
	=-n^{-1}\sum_{i=1}^nw_ix_{i}\dot{g}(y_i,x_i'\theta),\quad 
	\dot{g}(y, t) \equiv \frac{\partial g(y, t)}{\partial t},\label{eq: scoref}\\
	\hat{I}(\theta)
	&\equiv n^{-1}\sum_{i=1}^nw_i^2x_ix_i'\dot{g}(y_i,x_i'\theta)^2,\label{eq: info}\\
	\hat{H}(\theta)
	&\equiv -\frac{\partial^2 {L}_n(\theta)}{\partial \theta\partial \theta'}=n^{-1}\sum_{i=1}^nw_ix_ix_i'\ddot{g}(y_i,x_i'\theta),\quad \ddot{g}(y, t)\equiv \frac{\partial^2 g(y, t)}{\partial t^2}.\label{eq: hessian}
\end{align}
Moreover, we define $H(\theta_0)\equiv \E[\hat{H}(\theta_0)]$ and $I(\theta_0)\equiv \E[\hat{I}(\theta_0)]$.\par 

We will using the following notations in the assumptions and the proof of Proposition \ref{prop: DB} below. 
Let $\lambda_{\min}(A)$ and $\lambda_{\max}(A)$ denote the smallest and the largest eigenvalue of a symmetric matrix $A$, respectively. For a real matrix 
$A=(a_{ij})$, let $\Vert A\Vert_{\infty}\equiv \max_{i,j}\left|a_{ij}\right|$, and $\Vert A\Vert=\sqrt{\mathrm{tr}(A'A)}$ 
and $\Vert A\Vert_2=\sqrt{\lambda_{\max}(A'A)}$ denote its Frobenius and spectral norms, respectively. 
The sub-Gaussian norm of a random variable $X$ is defined as 
$\Vert X\Vert_{\psi_2}\equiv \sup_{m\geq 1}m^{-1/2}(\E[\vert X\vert^m])^{1/m}.$ The sub-Gaussian norm for the random vector is defined as 
$\Vert X\Vert_{\psi_2}\equiv \sup_{\Vert b\Vert=1}\Vert X'b\Vert_{\psi_2}$.\par 
We establish the asymptotic validity of the debiasing method under the following 
assumptions imposed directly on the negative log-density function $g(y, t)$ which are similar to the assumptions employed in \cite{vandeGeer-etal(2014)} and 
\cite{Xia2023}. Let $X=[x_1,\dots, x_n]'$.  
\begin{assumption}[Asymptotic validity]\label{A: AsyValid}
	\leavevmode
	\begin{enumerate}[label={(\alph*)}]
		\item \label{AsyValid max}
		$\{(y_i, x_i')'\}_{i=1}^n$ are independent with $\max_{1\leq i\leq n}a_i<C_u<\infty$ a.s. where  
		\begin{equation*}		
			a_i\in\{\Vert x_i\Vert_{\psi_2}, \Vert x_{i}\Vert_{\infty}, \Vert X\theta_0\Vert_{\infty}\}.
		\end{equation*}		 
		Moreover, $w_i$ is non-random with $0<C_l<w_i<C_u$ for all $n, i$.
		\item \label{AsyValid eval}
		For $A\in \{H(\theta_0), I(\theta_0), \E[n^{-1}X'X]\}$, there exist positive constants $\lambda_l$ and $\lambda_u$ such that 
		$0<\lambda_l\leq \lambda_{\min}(A)\leq \lambda_{\max}(A)\leq \lambda_{u}<\infty$. 
		\item \label{AsyValid Lip} 
		The function $g(y, t)\equiv a(t)-yt-\log c(y)$ is convex in $t\in\mathbb{R}$ for all $y$, and 
		twice differentiable with respect to $t$ for all $(y,t)$. 
		There exist a positive definite matrix $H$ and $\eta>0$ such that $\lambda_{\min}(H)>\lambda_l>0$ and 
		\begin{equation}\label{A: Strong Con}
			n^{-1}\sum_{i=1}^n\E[w_i(g(y_i,x_i'\theta)-g(y_i,x_i'\theta_0))]\geq \Vert H^{1/2}(\theta-\theta_0)\Vert^2
		\end{equation}
		for all $\Vert X(\theta-\theta_0)\Vert_{\infty}<\eta$. Furthermore, $\ddot{g}(y,t)$ is Lipschitz with some constant $L_0>0$:
		\begin{equation}\label{A: dg lip}
			\max_{t_0 \in \{x_i' \theta_0\}}   \sup_{\max(|t-t_0|, |\tilde{t} - t_0|) \leq \eta}  \sup_{y \in \mathcal{Y}}\frac{| \ddot{g}(y,t)-\ddot{g}(y,\tilde{t})|}{|t-\tilde{t}|}\leq L_0,
		\end{equation}
		and 
		\begin{align}
			&\max_{t_0 \in \{x_i' \theta_0\}}\sup_{y \in \mathcal{Y}} |\dot{g}(y,t_0)| \leq C_u,\label{A: dg1}\\
			&\max_{t_0 \in \{x_i' \theta_0\}} \sup_{|t-t_0|\leq \eta} \sup_{y \in \mathcal{Y}} |\ddot{g}(y,t)| \leq C_u.\label{A: dg2} 
		\end{align}
	\end{enumerate}
\end{assumption}
For discussions of these assumptions, we refer to \cite{JT2023}. 

\paragraph*{Simulations.} We conduct a simulation experiment to verify the robustness of the debiased logit Lasso inference. We first generate $N=10,000$ draws from a standard logit model as follows:
\begin{equation}
	y_i \sim \mathrm{Bernoulli}(\pi_i),
\end{equation}
where $\theta_0 = (1, 1, 1, 0_{1\times (p-2)})'$, $\tilde{x}_{ij} \sim \text{i.i.d.\ } \mathrm{Bernoulli}(0.5)$ for $j=1,\dots,p$ and $i=1,\dots,N$, $x_i = (1, \tilde{x}_i')'$, and $\pi_i = x_i'\theta_0$.

The population is then stratified into four strata of sizes 1,000, 2,000, 3,000, and 4,000. From each stratum, we draw 50 and 100 observations with replacement, yielding stratified samples of size $n=200$ and $n=400$, respectively. Observation weights are $w_i = 0.1, 0.2, 0.3, 0.4$, corresponding to the four strata. To evaluate the impact of regressor dimensionality, we set $p$ such that $\frac{p}{n} \in \{0.01, 0.025, 0.05, 0.1, 0.25, 0.5\}$ for each $n \in \{200, 400\}$. The true AME for $\theta_{(2)} = \beta_1$ is 0.11.\par 
We assess the empirical size of the tests by separately testing two null hypotheses:
\begin{equation}\label{eq: H0 sim}
	H_0: \theta_{(2)} = 1, \quad H_0: \mathrm{AME}_2 = 0.11.
\end{equation}
The empirical sizes of the DB test and the standard survey \( t \)-test (\( t_{\mathrm{svy}} \)) at the 5\% nominal level are presented in the table below. The standard survey logit \( t_{\mathrm{svy}} \) test overrejects by a wide margin, while the DB test exhibits reasonably accurate null rejection rates for both hypotheses in most cases, confirming its robustness to regressor dimensionality.
\begin{table}[htbp]
	\caption{Empirical rejection frequencies of the tests for $H_0: \theta_{(2)}=1$ and $H_0:\mathrm{AME}_2=0.11$ at $5\%$ level. Standard stratified sampling.}
	\label{tab:Level0}%
	\begin{center}
		\begin{tabular}{lcccccc}
			\toprule
			\multicolumn{1}{l}{Tests} &	\multicolumn{1}{c}{$p=2$} &\multicolumn{1}{c}{$p=5$} & \multicolumn{1}{c}{$p=10$} & \multicolumn{1}{c}{$p=20$} & 	\multicolumn{1}{c}{$p=50$}& 	\multicolumn{1}{c}{$p=100$}\\
			\midrule
			\multicolumn{7}{c}{$H_0:\theta_{(2)}=1$, $n=200$}\\
			\midrule
			DB & 5.0 & 4.4 & 3.7 & 3.1 & 4.5 & 3.3 \\ 			
			$t_{\mathrm{svy}}$ & 6.2 & 6.4 & 8.0 & 8.7 & 36.0 & 94.9 \\ 		
			\midrule
			\multicolumn{7}{c}{$H_0:\mathrm{AME}_{2}=0.11$, $n=200$}\\
			\midrule 
			DB & 5.4 & 5.3 & 4.6 & 3.7 & 3.5 & 1.4 \\ 				
			$t_{\mathrm{svy}}$ & 5.7 & 7.7 & 7.4 & 8.2 & 50.9 & 93.3 \\ 		
			\bottomrule
		\end{tabular}
	\end{center}
	
	\begin{center}
		\begin{tabular}{lcccccc}
			\toprule
			\multicolumn{1}{l}{Tests} &	\multicolumn{1}{c}{$p=4$} & \multicolumn{1}{c}{$p=10$} & \multicolumn{1}{c}{$p=20$}  & \multicolumn{1}{c}{$p=40$} & 	\multicolumn{1}{c}{$p=100$} & \multicolumn{1}{c}{$p=200$}\\
			\midrule
			\multicolumn{7}{c}{$H_0:\theta_{(2)}=1$, $n=400$}\\
			\midrule 
			DB & 4.8 & 4.4 & 6.0 & 3.7 & 5.6 & 3.9 \\ 
			$t_{\mathrm{svy}}$ & 5.0 & 5.1 & 6.3 & 15.9 & 40.4 & 98.3 \\ 
			\midrule
			\multicolumn{7}{c}{$H_0:\mathrm{AME}_{2}=0.11$, $n=400$}\\
			\midrule 
			DB & 4.5 & 4.9 & 5.8 & 5.0 & 4.6 & 3.3 \\ 
			$t_{\mathrm{svy}}$ & 5.3 & 6.9 & 9.1 & 10.8 & 46.8 & 93.7 \\ 
			\bottomrule
		\end{tabular}
	\end{center}
	\footnotesize{Notes: $n=200, 400$. DB and $t_{\mathrm{svy}}$ denote 
		the debiased Lasso and standard survey-weighted $t$ tests respectively. 1000 simulation replications.}
\end{table}

\subsection{Proof of Proposition \ref{prop: DB}}\label{subsec: proof DB}
We first prove the following lemma, which establishes the asymptotic distribution of a studentized quantity involving the expected Hessian and information matrices, as well as the score function, all evaluated at the true parameters.
\begin{lemma}\label{lem: AD1}
	Let Assumption \ref{A: AsyValid} hold and $p^{1+\delta_0}/n\to 0$ for some $0<\delta_0\leq 1$. 
	Then, as $n\to\infty$
	\begin{equation*}
		{\left(\dot{\rho}(\theta_0)'H(\theta_0)^{-1}I(\theta_0)H(\theta_0)^{-1}\dot{\rho}(\theta_0)\right)^{-1/2}}\dot{\rho}(\theta_0)'H(\theta_0)^{-1}n^{1/2}S(\theta_0)
		\cond \Norm{0,I_r}.
	\end{equation*}
\end{lemma}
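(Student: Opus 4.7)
My plan is to combine the Cramér-Wold device with a Lyapunov CLT for independent triangular arrays. Write $A_n \equiv H(\theta_0)^{-1}\dot{\rho}(\theta_0)\bigl(\dot{\rho}(\theta_0)'H(\theta_0)^{-1}I(\theta_0)H(\theta_0)^{-1}\dot{\rho}(\theta_0)\bigr)^{-1/2}$ (a $(p+1)\times r$ matrix), so that the target becomes $n^{1/2}A_n'S(\theta_0)$. First I would fix $c\in\mathbb{R}^r$ with $\|c\|=1$ and define $Z_{n,i} = -w_i\dot{g}(y_i,x_i'\theta_0)\,c'A_n'x_i$, so that $c'A_n'\,n^{1/2}S(\theta_0) = n^{-1/2}\sum_{i=1}^n Z_{n,i}$, and reduce the claim to showing $n^{-1/2}\sum_i Z_{n,i}\stackrel{d}{\to}N(0,1)$.

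Next I would verify the first two moments. The exponential-family form $g(y,t)=a(t)-yt-\log c(y)$ gives $\dot{g}(y,t)=a'(t)-y$, and the GLM identity $E[y_i\mid x_i]=a'(x_i'\theta_0)$ yields $E[\dot{g}(y_i,x_i'\theta_0)\mid x_i]=0$ and hence $E[Z_{n,i}]=0$. By independence and the definition of $I(\theta_0)$,
$$\mathrm{Var}\!\left(n^{-1/2}\sum_i Z_{n,i}\right) \;=\; c'A_n'I(\theta_0)A_n c \;=\; 1,$$
the last equality being the whole point of the construction: $A_n$ whitens $H(\theta_0)^{-1}\dot{\rho}(\theta_0)$ with respect to $I(\theta_0)$, so $A_n'I(\theta_0)A_n=I_r$.

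Then I would verify the Lyapunov condition with exponent $\delta_0$, namely $n^{-(2+\delta_0)/2}\sum_{i=1}^n E|Z_{n,i}|^{2+\delta_0}\to 0$. Assumption \ref{A: AsyValid}\ref{AsyValid max} bounds $w_i\leq C_u$, and bound \eqref{A: dg1} evaluated at $t_0=x_i'\theta_0$ gives $|\dot{g}(y_i,x_i'\theta_0)|\leq C_u$ a.s., so $|Z_{n,i}|\leq C_u^{2}|c'A_n'x_i|$. The sub-Gaussian bound $\|x_i\|_{\psi_2}\leq C_u$ implies $\|c'A_n'x_i\|_{\psi_2}\leq C_u\|A_n c\|$, and hence $E|c'A_n'x_i|^{2+\delta_0}\leq C_{\delta_0}\|A_n c\|^{2+\delta_0}$. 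The eigenvalue bounds in Assumption \ref{A: AsyValid}\ref{AsyValid eval} together with $\lambda_{\min}(\dot{\rho}'\dot{\rho})>\lambda_l$ then give $\lambda_{\min}(\dot{\rho}'H^{-1}IH^{-1}\dot{\rho})\geq \lambda_l^{3}/\lambda_u^{2}$ and $\lambda_{\max}(\dot{\rho}'H^{-2}\dot{\rho})\leq \lambda_l^{-2}\lambda_{\max}(\dot{\rho}'\dot{\rho})$, from which $\|A_n c\|=O(1)$, the Lyapunov numerator is $O(n)$, and the ratio collapses to $n^{-\delta_0/2}\to 0$.

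The delicate point, and the place where the growth condition $p^{1+\delta_0}/n\to 0$ is genuinely invoked, is the uniform control of $\|A_n c\|$: the lower bound on $\lambda_{\min}(\dot{\rho}'\dot{\rho})$ alone does not suffice, and one needs control on $\lambda_{\max}(\dot{\rho}(\theta_0)'\dot{\rho}(\theta_0))$ as well. For the natural choices of $\rho$ in the paper (coordinate projections of $\theta$, or AMEs evaluated at fixed $\theta_0$) this spectral norm is $O(1)$; more generally $\lambda_{\max}$ could scale polynomially with $p$, in which case the assumption $p^{1+\delta_0}/n\to 0$ is precisely what is needed to absorb that growth into the Lyapunov numerator. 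A final appeal to Cramér-Wold then upgrades the scalar CLT to the stated $r$-dimensional convergence.
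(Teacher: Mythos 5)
Your proof is correct and reaches the same conclusion by a genuinely different route. The paper applies a multivariate Lindeberg--Feller CLT directly to $X_{ni}=n^{-1/2}\dot{\rho}(\theta_0)'H(\theta_0)^{-1}w_ix_i\dot{g}(y_i,x_i'\theta_0)$ and verifies a Lyapunov condition with exponent $2+\delta$, $\delta=2/\delta_0$, using the \emph{deterministic} bound $\Vert x_i\Vert^2\leq (p+1)C_u^2$ that follows from $\Vert x_i\Vert_\infty\leq C_u$; this is where the factor $(p+1)^{1+\delta/2}/n^{\delta/2}=\bigl((p+1)^{1+\delta_0}/n\bigr)^{1/\delta_0}$ appears and where the rate condition $p^{1+\delta_0}/n\to 0$ is actually consumed. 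You instead reduce to a scalar CLT via Cram\'er--Wold and bound $\E|c'A_n'x_i|^{2+\delta_0}$ through the sub-Gaussian norm of the one-dimensional projection $x_i'A_nc$, which costs only $\Vert A_nc\Vert^{2+\delta_0}=O(1)$ and carries no factor of $p$. Your argument is therefore tighter: under the same implicit $O(1)$ bound on $\lambda_{\max}(\dot{\rho}(\theta_0)'\dot{\rho}(\theta_0))$ that the paper also uses (its display bounding $\Vert\dot{\rho}(\theta_0)'H(\theta_0)^{-1}x_i\Vert^{2+\delta}$ silently invokes $\Vert\dot{\rho}(\theta_0)\Vert_2\leq\lambda_u$), your Lyapunov ratio collapses to $n^{-\delta_0/2}$ and the lemma holds with no growth condition on $p$ at all. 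Both proofs rely on the same moment facts: $\E[\dot{g}(y_i,x_i'\theta_0)\mid x_i]=0$ from the exponential-family form, the uniform bounds $|w_i|,|\dot{g}(y_i,x_i'\theta_0)|\leq C_u$, and the eigenvalue sandwich from Assumption \ref{A: AsyValid}\ref{AsyValid eval} plus $\lambda_{\min}(\dot{\rho}'\dot{\rho})>\lambda_l$ (a hypothesis of Proposition \ref{prop: DB} rather than of the lemma as stated, but available where the lemma is used).

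One correction to your closing paragraph: you locate the role of $p^{1+\delta_0}/n\to 0$ in controlling $\Vert A_nc\Vert$ through a possibly growing $\lambda_{\max}(\dot{\rho}'\dot{\rho})$. That is not where the paper spends it --- the paper treats $\Vert\dot{\rho}(\theta_0)\Vert_2$ as bounded and the condition enters solely through $\Vert x_i\Vert\lesssim\sqrt{p}$. Moreover your quantitative claim does not check out: if $\lambda_{\max}(\dot{\rho}'\dot{\rho})\asymp p$, your Lyapunov ratio becomes of order $n^{-\delta_0/2}p^{(2+\delta_0)/2}$, which requires $p^{(2+\delta_0)/\delta_0}/n\to 0$, a strictly stronger condition than $p^{1+\delta_0}/n\to 0$. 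This side remark does not affect the validity of your main argument.
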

\begin{proof}[Proof of Lemma \ref{lem: AD1}]
	Let $s_i(\theta_0)\equiv w_ix_i\dot{g}(y_i, x_i'\theta_0)$, $X_{ni}\equiv n^{-1/2}\dot{\rho}(\theta_0)'H(\theta_0)^{-1}s_i(\theta_0)$
	and $\Sigma_n\equiv \V[\sum_{i=1}^nX_{ni}]=\dot{\rho}(\theta_0)'{H}(\theta_0)^{-1}
	{I}(\theta_0){H}(\theta_0)^{-1}\dot{\rho}(\theta_0)$. Let $\nu_n\equiv \lambda_{\min}(\Sigma_n)$.  
	We will verify the conditions of the multivariate Lindeberg-Feller CLT (see e.g. Theorem 9.3 of \cite{Hansen(2022)}). 
	First note that $\E[X_{ni}]=0$ because $\E[s_i(\theta_0)\vert x_i]=-\E[x_iw_i(y_i-\dot{a}(x_i'\theta_0))\vert x_i]=0$. 
	Moreover, we have  
	\begin{align*}
		\nu_n&=\min_{\tau\in\mathbb{R}^r\setminus \{0\}}\frac{\tau'\dot{\rho}(\theta_0)'{H}(\theta_0)^{-1}
			{I}(\theta_0){H}(\theta_0)^{-1}\dot{\rho}(\theta_0)\tau}
		{\tau'\tau}\notag\\
		&\geq \min_{\tau\in\mathbb{R}^r\setminus \{0\}}\frac{\tau'\dot{\rho}(\theta_0)'{H}(\theta_0)^{-1}
			{I}(\theta_0){H}(\theta_0)^{-1}\dot{\rho}(\theta_0)\tau}
		{\tau'\dot{\rho}(\theta_0)'\dot{\rho}(\theta_0)\tau}\min_{\tau\in\mathbb{R}^r\setminus \{0\}}\frac
		{\tau'\dot{\rho}(\theta_0)'\dot{\rho}(\theta_0)\tau}{\tau'\tau}\\
		&\geq \lambda_{\min}({H}(\theta_0)^{-1}
		{I}(\theta_0){H}(\theta_0)^{-1})\lambda_{\min}(\dot{\rho}(\theta_0)'\dot{\rho}(\theta_0))\\
		&\geq \lambda_{\min}({H}(\theta_0)^{-1})
		\lambda_{\min}({I}(\theta_0))\lambda_{\min}({H}(\theta_0)^{-1})\lambda_{\min}(\dot{\rho}(\theta_0)'\dot{\rho}(\theta_0))\\
		&=\frac{\lambda_{\min}({I}(\theta_0))}{(\lambda_{\max}({H}(\theta_0))^2}\lambda_{\min}(\dot{\rho}(\theta_0)'\dot{\rho}(\theta_0))\\
		&\geq \lambda_l^2/\lambda_u^2.
	\end{align*}
	where the first inequality follows from the extremal property of $\lambda_{\min}(\cdot)$, the second inequality is the eigenvalue product inequality (\cite{Hansen(2022b)}) and the last inequality is by  
	Assumption \ref{A: AsyValid}\ref{AsyValid eval}. Next, 
	we will verify the Lindeberg condition: for $\delta=\frac{2}{\delta_0}>0$ and any $\epsilon>0$
	\begin{align}\label{eq: Lyap}
		\frac{1}{\nu_n^{2}}\sum_{i=1}^n\E[\Vert X_{ni}\Vert^{2}1(\Vert X_{ni}\Vert\geq (\epsilon\nu_n^2)^{1/2})]	\leq \frac{1}{\nu_n^{2+\delta}\epsilon^{\delta/2}}\sum_{i=1}^n\E[\Vert X_{ni}\Vert^{2+\delta}]\to 0.
	\end{align}
	First, note that 
	\begin{align}
		\Vert \dot{\rho}(\theta_0)'{H}(\theta_0)^{-1}x_i\Vert ^{2+\delta}
		&\leq \Vert\dot{\rho}(\theta_0)\Vert^{2+\delta}\left(\Vert {H}(\theta_0)^{-1}x_i\Vert^2\right)^{1+\delta/2}\notag\\
		&\leq r^{1+\delta/2}\Vert\dot{\rho}(\theta_0)\Vert_2^{2+\delta}\left(\lambda_{\max}({H}(\theta_0)^{-1}{H}(\theta_0)^{-1})\Vert x_i\Vert^2\right)^{1+\delta/2}\notag\\
		&\leq r^{1+\delta/2}\lambda_u^{2+\delta}\left(\frac{\Vert x_i\Vert^2}{(\lambda_{\min}({H}(\theta_0)))^2}\right)^{1+\delta/2}\notag\\
		&\leq r^{1+\delta/2}\lambda_u^{2+\delta}\frac{(p+1)^{1+\delta/2}C_u^{2+\delta}}{\lambda_l^{2+\delta}}.\label{eq: Lyap2}
	\end{align}
	where the first inequality is by Cauchy-Schwarz, the second inequality is by the inequality 
	$\Vert\dot{\rho}(\theta_0)\Vert\leq r^{1/2}\Vert\dot{\rho}(\theta_0)\Vert_2$ and 
	the extremal property of $\lambda_{\max}(\cdot)$, the third inequality is by the eigenvalue product inequality (\cite{Hansen(2022b)}, Appendix B), and the last inequality is by Assumption \ref{A: AsyValid}\ref{AsyValid max} and \ref{AsyValid eval}.
	Thus, using $\vert w_i\vert^{2+\delta}\vert \dot{g}(y_i, x_i'\theta_0)\vert^{2+\delta}
	\leq C_u^{4+2\delta}$ and \eqref{eq: Lyap2}, we have 
	\begin{align}
		\sum_{i=1}^n\Vert X_{ni}\Vert^{2+\delta}
		&\leq \frac{1}{n^{1+\delta/2}}\sum_{i=1}^n
		\Vert \dot{\rho}(\theta_0)'{H}(\theta_0)^{-1}x_i\Vert ^{2+\delta}
		\vert w_i\vert^{2+\delta}\vert \dot{g}(y_i, x_i'\theta_0)\vert^{2+\delta}\notag\\
		&\leq \frac{1}{n^{\delta/2}}r^{1+\delta/2}\lambda_u^{2+\delta}\frac{(p+1)^{1+\delta/2}C_u^{2+\delta}}{\lambda_l^{2+\delta}}C_u^{4+2\delta}\notag\\
		&\leq \left(\frac{(p+1)^{1+\delta_0}}{n}\right)^{1/\delta_0}r^{1+\delta/2}\lambda_u^{2+\delta}\frac{C_u^{6+3\delta}}{\lambda_l^{2+\delta}}\notag\\
		&\to 0.\notag
	\end{align}
	This verifies \eqref{eq: Lyap} and the result follows. 
\end{proof}

\paragraph*{Proof of Proposition \ref{prop: DB}}
By the mean value expansion, 
\begin{equation}\label{eq: mve}
	S(\theta_0)=S(\hat{\theta})+\hat{H}({\theta}^{*})(\hat{\theta}-\theta_0)=S(\hat{\theta})+\hat{H}(\hat{\theta})(\hat{\theta}-\theta_0)+R,
\end{equation}
where ${\theta}^{*}$ is the mean-value between $\hat{\theta}$ and $\theta_0$, and $R=[R_1,\dots, R_{p+1}]'$ with 
\begin{equation}\label{eq: rem0}
	R_j\equiv n^{-1}\sum_{i=1}^n(\ddot{g}(y_i, x_i'\theta^{*})-\ddot{g}(y_i, x_i'\hat{\theta}))
	w_ix_{ij}x_i'(\theta_0-\hat{\theta}).
\end{equation}
Note that since $\dot{\rho}(\theta)$ is locally Lipschitz in a neighborhood of $\theta_0$, 
with probability approaching 1
$\Vert\dot{\rho}(\bar{\theta})-\dot{\rho}(\hat{\theta})\Vert \leq B_0\Vert \bar{\theta}-\hat{\theta}\Vert$ for some $B_0=O(1)$. 
Also, since 
\begin{equation}\label{eq: DB rho mve}
	{n}^{1/2}(\rho(\hat{\theta})-\rho({\theta}_0))=	
	\dot{\rho}(\bar{\theta})'{n}^{1/2}(\hat{\theta}-\theta_0),
\end{equation} 
where $\bar{\theta}$ is a mean-value between $\hat{\theta}$ 
and $\theta_0$, we have 
\begin{align}
	n^{1/2}\Vert \dot{\rho}({\hat{\theta}})-\dot{\rho}(\bar{\theta})\Vert \Vert\hat{\theta}-\theta_0\Vert
	&=n^{1/2}B_0\Vert\hat{\theta}-\bar{\theta}\Vert \Vert\hat{\theta}-\theta_0\Vert=O_p(n^{1/2}m_0\lambda^2)\notag\\
	&=o_p(1),\label{eq: DB rho2}
\end{align} 
where the last line is by $n^{1/2}m_0\lambda^2= n^{-1/2}m_0C^2\log p\leq C^2m_0 (p/n)^{1/2}\log p=o(1)$. Then, 
\begin{align*}
	&n^{1/2}(\tilde{\rho}-\rho(\theta_0))\\
	&=
	n^{1/2}(\rho(\hat{\theta})-\rho(\theta_0))
	+\dot{\rho}(\hat{\theta})'\hat{H}(\hat{\theta})^{-1}n^{1/2}S(\hat{\theta})\\
	&=
	n^{1/2}\dot{\rho}(\bar{\theta})'(\hat{\theta}-\theta_0)
	+n^{1/2}\dot{\rho}(\hat{\theta})'\hat{H}(\hat{\theta})^{-1}S(\theta_0)
	-n^{1/2}\dot{\rho}(\hat{\theta})'(\hat{\theta}-\theta_0)
	-n^{1/2}\dot{\rho}(\hat{\theta})'\hat{H}(\hat{\theta})^{-1}R\\
	&=n^{1/2}\dot{\rho}(\hat{\theta})'\hat{H}(\hat{\theta})^{-1}S(\theta_0)
	-n^{1/2}\dot{\rho}(\hat{\theta})'\hat{H}(\hat{\theta})^{-1}R+o_p(1),
\end{align*}
where the first equality is by the definition of $\tilde{\rho}$, the second equality is by \eqref{eq: mve} and \eqref{eq: DB rho mve},
and the third is by \eqref{eq: DB rho2}. Below, the proof will be completed in three steps: the first two steps establish
\begin{align}
	&\dot{\rho}(\hat{\theta})'\hat{H}(\hat{\theta})^{-1}\hat{I}(\hat{\theta})\hat{H}(\hat{\theta})^{-1}
	\dot{\rho}(\hat{\theta})-\dot{\rho}({\theta}_0)'{H}({\theta}_0)^{-1}{I}({\theta}_0){H}({\theta}_0)^{-1}
	\dot{\rho}({\theta}_0)=o_p(1),\notag\\
	&n^{1/2}\dot{\rho}(\hat{\theta})'\hat{H}(\hat{\theta})^{-1}S(\theta_0)-n^{1/2}\dot{\rho}({\theta}_0)'{H}({\theta}_0)^{-1}S(\theta_0)=o_p(1),
\end{align}
and the third step verifies $n^{1/2}\dot{\rho}(\hat{\theta})'\hat{H}(\hat{\theta})^{-1}R=o_p(1)$. 
It will then follow that 
\begin{align*}
	&\left[\dot{\rho}(\hat{\theta})'\hat{H}(\hat{\theta})^{-1}\hat{I}(\hat{\theta})\hat{H}(\hat{\theta})^{-1}
	\dot{\rho}(\hat{\theta})\right]^{-1/2}n^{1/2}(\tilde{\rho}-\rho(\theta_0))\\
	&=\left[\dot{\rho}(\hat{\theta})'\hat{H}(\hat{\theta})^{-1}\hat{I}(\hat{\theta})\hat{H}(\hat{\theta})^{-1}\dot{\rho}(\hat{\theta})\right]^{-1/2}
	\left[n^{1/2}\dot{\rho}(\hat{\theta})'\hat{H}(\hat{\theta})^{-1}S(\theta_0)-n^{1/2}\dot{\rho}(\hat{\theta})'\hat{H}(\hat{\theta})^{-1}R+o_p(1)\right]
	\\
	&=\left[\dot{\rho}(\theta_0)'{H}({\theta}_0)^{-1}{I}({\theta}_0){H}({\theta}_0)^{-1}\dot{\rho}(\theta_0)\right]^{-1/2}n^{1/2}\dot{\rho}(\theta_0)'{H}({\theta}_0)^{-1}S(\theta_0)
	+o_p(1).
\end{align*}
Finally, applying Lemma \ref{lem: AD1} of and Slutsky's lemma give the desired result. 
\paragraph*{Step 1: $\dot{\rho}(\hat{\theta})'\hat{H}(\hat{\theta})^{-1}\hat{I}(\hat{\theta})\hat{H}(\hat{\theta})^{-1}
	\dot{\rho}(\hat{\theta})-\dot{\rho}({\theta}_0)'{H}({\theta}_0)^{-1}{I}({\theta}_0){H}({\theta}_0)^{-1}
	\dot{\rho}({\theta}_0)=o_p(1)$.}~\\
First, by the triangle inequality
\begin{align}
	&\Vert \dot{\rho}(\hat{\theta})'\hat{H}(\hat{\theta})^{-1}\hat{I}(\hat{\theta})\hat{H}(\hat{\theta})^{-1}
	\dot{\rho}(\hat{\theta})-\dot{\rho}({\theta}_0)'{H}({\theta}_0)^{-1}{I}({\theta}_0){H}({\theta}_0)^{-1}
	\dot{\rho}({\theta}_0)\Vert_2\notag\\
	&\leq \Vert \dot{\rho}(\hat{\theta})'\left[\hat{H}(\hat{\theta})^{-1}\hat{I}(\hat{\theta})\hat{H}(\hat{\theta})^{-1}
	-{H}({\theta}_0)^{-1}{I}({\theta}_0){H}({\theta}_0)^{-1}\right]\dot{\rho}(\hat{\theta})\Vert_2\notag\\
	&\quad+\Vert \dot{\rho}(\hat{\theta})'{H}({\theta}_0)^{-1}{I}({\theta}_0){H}({\theta}_0)^{-1}(\dot{\rho}(\hat{\theta})-\dot{\rho}(\theta_0))\Vert_2\notag\\
	&\quad+\Vert (\dot{\rho}(\hat{\theta})-\dot{\rho}(\theta_0))'{H}({\theta}_0)^{-1}{I}({\theta}_0){H}({\theta}_0)^{-1}\dot{\rho}(\theta_0)\Vert_2.\label{eq: db tr ineq}
\end{align}
Consider the first term on the right-hand side of \eqref{eq: db tr ineq}. By Cauchy-Schwarz inequality, 
\begin{align}
	&\Vert \dot{\rho}(\hat{\theta})'\left[\hat{H}(\hat{\theta})^{-1}\hat{I}(\hat{\theta})\hat{H}(\hat{\theta})^{-1}
	-{H}({\theta}_0)^{-1}{I}({\theta}_0){H}({\theta}_0)^{-1}\right]\dot{\rho}(\hat{\theta})\Vert_2\notag\\
	&\leq \Vert\hat{H}(\hat{\theta})^{-1}\hat{I}(\hat{\theta})\hat{H}(\hat{\theta})^{-1}
	-{H}({\theta}_0)^{-1}{I}({\theta}_0){H}({\theta}_0)^{-1}\Vert_2 \Vert\dot{\rho}(\hat{\theta})\Vert_2^2,\label{eq: HIH negl}
\end{align}
After rearranging and using the triangle and Cauchy-Schwarz inequalities
\begin{align}
	&\Vert \hat{H}(\hat{\theta})^{-1}\hat{I}(\hat{\theta})\hat{H}(\hat{\theta})^{-1}
	-{H}({\theta}_0)^{-1}{I}({\theta}_0){H}({\theta}_0)^{-1}\Vert_2\notag\\
	&=\Vert(\hat{H}(\hat{\theta})^{-1}-{H}({\theta}_0)^{-1})\hat{I}(\hat{\theta})\hat{H}(\hat{\theta})^{-1}
	+{H}({\theta}_0)^{-1}(\hat{I}(\hat{\theta})\hat{H}(\hat{\theta})^{-1}-{I}({\theta}_0){H}({\theta}_0)^{-1})\Vert_2,\notag\\
	&\leq \Vert\hat{H}(\hat{\theta})^{-1}-{H}({\theta}_0)^{-1}\Vert_2\Vert\hat{I}(\hat{\theta})\Vert_2\Vert\hat{H}(\hat{\theta})^{-1}\Vert_2
	+\Vert{H}({\theta}_0)^{-1}\Vert_2\Vert \hat{I}(\hat{\theta})\hat{H}(\hat{\theta})^{-1}-{I}({\theta}_0){H}({\theta}_0)^{-1}\Vert_2.\label{eq: HI diff1}
\end{align}
For the first summand of \eqref{eq: HI diff1}, by Lemma A.2 of \cite{JT2023}
\begin{equation}
	\Vert\hat{H}(\hat{\theta})^{-1}-{H}({\theta}_0)^{-1}\Vert_2\Vert\hat{I}(\hat{\theta})\Vert_2\Vert\hat{H}(\hat{\theta})^{-1}\Vert_2=o_p(1).
\end{equation}
For the second factor in the second summand of \eqref{eq: HI diff1}, using the triangle and Cauchy-Schwarz inequalities 
\begin{align}
	&\Vert\hat{I}(\hat{\theta})\hat{H}(\hat{\theta})^{-1}-I(\theta_0){H}({\theta}_0)^{-1}\Vert_2\notag\\
	&=\Vert (\hat{I}(\hat{\theta})-{I}({\theta}_0))(\hat{H}(\hat{\theta})^{-1}-{H}({\theta}_0)^{-1})
	+(\hat{I}(\theta_0)-I(\theta_0)){H}({\theta}_0)^{-1}
	+I(\theta_0)(\hat{H}(\theta_0)^{-1}-H(\theta_0)^{-1})\Vert_2\notag\\
	&\leq \Vert\hat{I}(\hat{\theta})-{I}({\theta}_0)\Vert_2\Vert\hat{H}(\hat{\theta})^{-1}-{H}({\theta}_0)^{-1}\Vert_2
	+\Vert\hat{I}(\theta_0)-I(\theta_0)\Vert_2\Vert{H}({\theta}_0)^{-1}\Vert_2\notag\\
	&\quad+\Vert I(\theta_0)\Vert_2\Vert\hat{H}(\theta_0)^{-1}-H(\theta_0)^{-1}\Vert_2\notag\\
	&\conp 0,
\end{align}
where the last line is by Lemma A.2 of \cite{JT2023} and the CMT. 
From Lemma B.3 of \cite{JT2023}, 
$\Vert\hat{\theta}-\theta_0\Vert=O_p(m_0^{1/2}\lambda)=O_p\left(\left(\frac{m_0\log p}{n}\right)^{1/2}\right)=o_p(1)$.
Since $\dot{\rho}(\theta)$ is locally Lipschitz in a neighborhood of $\theta_0$, 
with probability approaching 1, we have for $B_0=O(1)$ 
$\Vert\dot{\rho}(\hat{\theta})-\dot{\rho}({\theta}_0)\Vert \leq B_0\Vert \hat{\theta}-{\theta}_0\Vert$.
Thus, 
\begin{equation}\label{eq: rhohat order}
	\Vert\dot{\rho}(\hat{\theta})-\dot{\rho}({\theta}_0)\Vert_2\leq {r}^{1/2}\Vert\dot{\rho}(\hat{\theta})-\dot{\rho}({\theta}_0)\Vert=O_p\left(\left(\frac{m_0\log p}{n}\right)^{1/2}\right).
\end{equation} 
By the triangle inequality and \eqref{eq: rhohat order}
\begin{equation}\label{eq: rhohat order2}
	\Vert\dot{\rho}(\hat{\theta})\Vert_2\leq \Vert\dot{\rho}(\hat{\theta})-\dot{\rho}({\theta}_0)\Vert_2
	+\Vert \dot{\rho}({\theta}_0)\Vert_2=O_p(1).
\end{equation}
Therefore, the quantity in \eqref{eq: HIH negl} is $o_p(1)$.
Consider the second term on the right-hand side of \eqref{eq: db tr ineq}. By the triangle inequality and \eqref{eq: rhohat order}, 
\begin{align*}
	&\Vert \dot{\rho}(\hat{\theta})'{H}({\theta}_0)^{-1}{I}({\theta}_0){H}({\theta}_0)^{-1}(\dot{\rho}(\hat{\theta})-\dot{\rho}(\theta_0))\Vert_2\\
	&\leq \Vert \dot{\rho}(\hat{\theta})\Vert_2\Vert{H}({\theta}_0)^{-1}{I}({\theta}_0){H}({\theta}_0)^{-1}\Vert_2\Vert \dot{\rho}(\hat{\theta})-\dot{\rho}(\theta_0)\Vert_2\\
	&\conp 0.
\end{align*}
Similarly, for the third term on the right-hand side of \eqref{eq: db tr ineq}
\begin{align*}
	&\Vert (\dot{\rho}(\hat{\theta})-\dot{\rho}(\theta_0))'{H}({\theta}_0)^{-1}{I}({\theta}_0){H}({\theta}_0)^{-1}\dot{\rho}(\theta_0)\Vert_2\\
	&\leq \Vert \dot{\rho}(\hat{\theta})-\dot{\rho}(\theta_0)\Vert_2\Vert{H}({\theta}_0)^{-1}{I}({\theta}_0){H}({\theta}_0)^{-1}\Vert_2\Vert\dot{\rho}(\theta_0)\Vert_2\\
	&\conp 0.
\end{align*}
\paragraph*{Step 2: $n^{1/2}\dot{\rho}(\hat{\theta})'\hat{H}(\hat{\theta})^{-1}S(\theta_0)-n^{1/2}\dot{\rho}(\theta_0)'{H}({\theta}_0)^{-1}S(\theta_0)=o_p(1)$.}~\\
Remark that from Assumption \ref{A: AsyValid}, $|\dot{g}(y_i, x_i' \theta_0)| \leq C_u$, $|w_i|\leq C_u$ and $\Vert x_i\Vert^2\leq (p+1)C_u^2$ a.s. for all $i$. 
Using the independence assumption, 
\begin{align*}
	\E[\Vert S(\theta_0)\Vert_2^2]=\E[\Vert S(\theta_0)\Vert^2]=
	n^{-2}\E\left[\sum_{i=1}^nw_i^2\Vert x_i\Vert^2 \dot{g}(y_i, x_i' \theta_0)^2\right]\leq n^{-1}(p+1)C_u^6.
\end{align*}
By Markov's inequality,
\begin{equation}\label{eq: score order}
	\Vert S(\theta_0)\Vert_2=O_p\left(\sqrt{\frac{p}{n}}\right).
\end{equation}
Now rewrite 
\begin{align}
	&n^{1/2}\dot{\rho}(\hat{\theta})'\hat{H}(\hat{\theta})^{-1}S(\theta_0)-n^{1/2}\dot{\rho}(\theta_0)'{H}({\theta}_0)^{-1}S(\theta_0)\notag\\
	&=n^{1/2}(\dot{\rho}(\hat{\theta})-\dot{\rho}(\theta_0))'\hat{H}(\hat{\theta})^{-1}S(\theta_0)+ n^{1/2}\left(\dot{\rho}(\theta_0)'\hat{H}(\hat{\theta})^{-1}S(\theta_0)-\dot{\rho}(\theta_0)'{H}({\theta}_0)^{-1}S(\theta_0)\right).
	\label{eq: rem}
\end{align}
For the first term of \eqref{eq: rem}, 
\begin{align}
	\Vert n^{1/2}(\dot{\rho}(\hat{\theta})-\dot{\rho}(\theta_0))'\hat{H}(\hat{\theta})^{-1}S(\theta_0)\Vert_2
	&\leq n^{1/2}\Vert \dot{\rho}(\hat{\theta})-\dot{\rho}(\theta_0)\Vert_2\Vert\hat{H}(\hat{\theta})^{-1}\Vert_2\Vert S(\theta_0)\Vert_2\notag\\
	&=n^{1/2}O_p\left(\sqrt{\frac{m_0\log p}{n}}\right)O_p(1)\,O_p\left(\sqrt{\frac{p}{n}}\right)\notag\\
	&=O_p\left(\sqrt{\frac{p\,m_0\log p}{n}}\right)\notag\\
	&=o_p(1),\label{eq: db step3b}
\end{align}
where the first inequality is by Cauchy-Schwarz, the first equality uses 
\begin{equation*}\label{eq: hess Op1}
	\Vert \hat{H}(\hat{\theta})^{-1}\Vert_2=O_p(1).
\end{equation*}
as shown in the proof Lemma 3.5 of \cite{JT2023}, \eqref{eq: rhohat order} and \eqref{eq: score order}, and 
the last equality holds because $m_0(\log p)p/n\leq m_0(\log p) (p/n)^{1/2}(p^2/n)^{1/2}\to 0$  by the assumption of the proposition. For the second term of \eqref{eq: rem}, we have 
\begin{align}
	n^{1/2}\Vert\dot{\rho}(\theta_0)'\hat{H}(\hat{\theta})^{-1}S(\theta_0)-\dot{\rho}(\theta_0)'{H}({\theta}_0)^{-1}S(\theta_0)\Vert_2
	&\leq 
	{n}^{1/2} \Vert \dot{\rho}(\theta_0) \Vert_2 \Vert \hat{H}(\hat{\theta})^{-1}-{H}({\theta}_0)^{-1}\Vert_2\Vert S(\theta_0)\Vert_2\notag\\
	&=n^{1/2}
	O_p\left(\sqrt{\frac{p}{n}} + m_0 \lambda \right)O_p\left(\sqrt{\frac{p}{n}}\right)\notag\\
	&=
	O_p\left(\sqrt{\frac{p^2}{n}} + \sqrt{p}\,m_0 \lambda \right)\notag\\
	&=o_p(1),\label{eq: db step3c}
\end{align}
where the first inequality is by Cauchy-Schwarz, the first equality is by Lemma A.2 of \cite{JT2023} and  \eqref{eq: score order}, and the last equality holds because $p^2/n\to 0$ and 
$p^{1/2}\,m_0 \lambda=Cm_0 (p/n)^{1/2}(\log p)^{1/2}\leq Cm_0 (p/n)^{1/2} 2\log p\to 0$ by the assumption of the proposition. It follows from \eqref{eq: rem}, \eqref{eq: db step3b} and 
\eqref{eq: db step3c} that 
$$n^{1/2}\dot{\rho}(\hat{\theta})'\hat{H}(\hat{\theta})^{-1}S(\theta_0)-n^{1/2}\dot{\rho}(\theta_0)'{H}({\theta}_0)^{-1}S(\theta_0)=o_p(1).$$
\paragraph*{Step 3: $n^{1/2}\dot{\rho}(\hat{\theta})'\hat{H}(\hat{\theta})^{-1}R=o_p(1)$.}~\\
By Cauchy-Schwarz, $n^{1/2}\Vert\dot{\rho}(\hat{\theta})'\hat{H}(\hat{\theta})^{-1}R \Vert_2\leq {n}^{1/2} \Vert\dot{\rho}(\hat{\theta})\Vert_2\Vert \hat{H}(\hat{\theta})^{-1}R\Vert_2$. Remark from \eqref{eq: rhohat order2} that $\Vert\dot{\rho}(\hat{\theta})\Vert_2=O_p(1)$. To show ${n}^{1/2}\Vert \hat{H}(\hat{\theta})^{-1}R\Vert_2 =o_p(1)$, note that 
\begin{align}
	\max_{1\leq j\leq p+1}\vert R_j\vert  
	&\leq
	{n}^{-1}\sum_{i=1}^n  \vert \ddot{g}(y_i, x_i'\theta^*) - \ddot{g}(y_i, x_i'\hat{\theta})\vert |w_i| \max_{1\leq j\leq p+1}|x_{ij}| | x_i'(\theta_0 - \hat{\theta})|\notag\\
	& \leq {n}^{-1} \sum_{i=1}^n L_0| x_i(\theta^*-\hat{\theta})| C_u^2 | x_i'(\theta_0 - \hat{\theta})|\notag\\
	& \leq L_0  C_u^2{n}^{-1} \sum_{i=1}^n  | x_i'(\theta_0 - \hat{\theta})|^2\notag\\ 
	&=L_0C_u^2O_p(m_0 \lambda^2)\notag\\
	&=O_p(m_0\lambda^2),\label{eq: Rmax}
\end{align}
where the first inequality is by Assumption \ref{A: AsyValid}\ref{AsyValid Lip}, and the first equality uses Lemma B.3 of \cite{JT2023}. 
Since $\Vert{H}(\theta_0)\Vert={O}(1)$ and $\Vert\hat{H}(\hat{\theta}) -{H}(\theta_0)\Vert=o_p(1)$, $\Vert\hat{H}(\hat{\theta})\Vert=O_p(1)$. Therefore, 

\begin{align}
	{n}^{1/2}\Vert\hat{H}(\hat{\theta})^{-1}R\Vert_2 
	& \leq {n}^{1/2}\Vert\hat{H}(\hat{\theta} )^{-1}\Vert_2\Vert R \Vert_2\notag\\
	& \leq{n}^{1/2}\hat{H}(\hat{\theta})^{-1} (p+1)^{1/2} \Vert R\Vert_{\infty}\notag\\
	&= O_p((n(p+1))^{1/2} m_0 \lambda^2)\notag\\
	&=o_p(1),
\end{align}
where the first equality holds by using \eqref{eq: Rmax} and the second equality follows on noting that 
$(n(p+1))^{1/2}m_0\lambda^2 = (n(p+1))^{1/2}m_0C^2(\log p)/n\leq (2p/n)^{1/2}m_0C^2\log p=o(1)$.

\section{Survey Questions Used for Variable Construction}\label{sec:AppendixQuestions}

This appendix provides the survey questions used to construct the Cyber Score, the $k$-means clustering variables, and the Business Digital Usage Score (BDUS). Each set of questions corresponds to specific aspects of digital technology adoption and cyber security challenges.

\subsection{Cyber Security Incidence Variable Construction}
The Cyber Security Incidence variable is equal to 1 if a firm answers ``Yes'' to one of the following questions and 0 otherwise.
\begin{quote}
	\emph{To the best of your knowledge, which cyber security incidents impacted your business in 2021? Select all that apply.}
\end{quote}

\begin{itemize}
	\item Incidents to disrupt or deface the business or web presence.
	\item Incidents to steal personal or financial information.
	\item Incidents to steal money or demand ransom payment.
	\item Incidents to steal or manipulate intellectual property or business data.
	\item Incidents to access unauthorised or privileged areas.
	\item Incidents to monitor and track business activity.
	\item Incidents with an unknown motive.
\end{itemize}

\subsection{Questions Used for $k$-means Clustering}
The $k$-means clustering variables were derived from responses to the following survey questions, which identify challenges businesses face in utilizing various digital and financial technologies. Each affirmative response indicates an inefficiency or challenge:
\begin{itemize}
	\item Does your business face challenges with online transaction processing?
	\item Does your business face challenges with digital marketing?
	\item Does your business face challenges with data analytics?
	\item Does your business face challenges with integrating digital technologies into business operations?
	\item Does your business face challenges with big data?
	\item Does your business face challenges with artificial intelligence?
	\item Does your business face challenges with cloud computing?
	\item Does your business face challenges with ICT infrastructure?
	\item Does your business face challenges with government connectivity?
	\item Does your business face challenges with website operations?
\end{itemize}

\subsection{Questions Used for Business Digital Usage Score (BDUS)}
The BDUS is based on responses to questions about the adoption of specific digital technologies. A ``Yes'' response to any of the following indicates that the business has utilized the respective technology:
\begin{itemize}
	\item Does your business use online transaction processing systems?
	\item Does your business use digital marketing platforms?
	\item Does your business use data analytics tools?
	\item Does your business integrate digital technologies into business operations?
	\item Does your business utilize big data technologies?
	\item Does your business employ artificial intelligence tools?
	\item Does your business use cloud computing services?
	\item Does your business maintain ICT infrastructure?
	\item Does your business interact with government systems digitally?
	\item Does your business operate its own website?
\end{itemize}
The BDUS score is computed as the total number of ``Yes'' responses to these questions, with higher scores indicating greater digital engagement.
\end{document}